\documentclass[11pt, a4paper, oneside,reqno]{amsart}

\usepackage{style}
\usepackage[square,numbers]{natbib}
\usepackage{amsmath}
\usepackage{amssymb}
\usepackage{mathtools}
\usepackage{amsthm}
\usepackage{bm}
\usepackage{comment}
\usepackage{booktabs}
\usepackage{microtype}
\usepackage{tikz}
\usepackage{pgfplots}
\pgfplotsset{compat=newest}
\pgfplotsset{compat=1.18}
\usepgfplotslibrary{fillbetween}

\usepackage[most]{tcolorbox}

\tcolorboxenvironment{example}{
    colback=blue!8,
    colframe=blue!50,
    boxrule=0.5pt,
    arc=2pt,
    left=6pt,
    right=6pt,
    top=6pt,
    bottom=6pt,
}

\usepackage[textsize=tiny]{todonotes}

\usepackage[linesnumbered,ruled]{algorithm2e}
\usepackage{graphicx}
\usepackage{eurosym}
\usepackage{makecell}
\usepackage{subcaption}
\newtheorem{theorem}{Theorem}[section]
\newtheorem{assumption}[theorem]{Assumption}
\newtheorem{proposition}[theorem]{Proposition}

\newtheorem{lemma}{Lemma}
\newtheorem{example}[theorem]{Example}

\newtheorem*{claim*}{Claim}
\newtheorem*{problem}{Problem}

\graphicspath{{Fig/}}

\allowdisplaybreaks

\title[]{Adaptive Incentive Design in Dynamic \\ Principal-Agent Problem via Kernelized Bandits}

\author[]{Arghya Mallick, Anuj S. Vora, Sergio Grammatico, and Peyman Mohajerin Esfahani}
\thanks{This work was partly funded by EU project DriVe2X (grant no. 101056934), and the ERC project TRUST (grant no. 949796). Arghya Mallick, and Sergio Grammatico are with Delft Center for Systems and Control, Delft University of Technology, 
Delft, The Netherlands. Email: \texttt{\{A.Mallick,\,  s.grammatico\}}@tudelft.nl; Anuj S. Vora is with Jio Institute, Navi Mumbai, India. Email: \texttt{anuj.vora}@jioinstitute.edu.in; Peyman Mohajerin Esfahani is with the University of Toronto, Toronto, Canada. Email: \texttt{P.MohajerinEsfahani}@utoronto.ca}
\thanks{} 

\begin{document}

\begin{abstract}
We consider the dynamic principal-agent problem under asymmetric information, wherein a principal sequentially designs contracts to incentivize an agent with unknown preferences and hidden actions. A fundamental bottleneck in the existing literature is the assumption of deterministic agent utility, which renders the principal's expected utility discontinuous and forces computationally intractable discretizations of the contract space. In this paper, we address this limitation by introducing a stochastic counterpart into the agent's utility model, capturing the inherent physical and behavioral variations in realistic subsystems. We formally prove that this stochastic formulation restores the continuity of the principal's expected utility. Leveraging this continuous geometric structure, we formulate the interaction as a structured multi-armed bandit problem subject to heteroscedastic noise. We propose a \texttt{Heteroscedastic GP-UCB} algorithm that utilizes a Neural Network (Arcsin) kernel, chosen to capture the non-stationary, sigmoidal geometry of the utility landscape. For an $m$-dimensional compact contract space, we establish a high-probability cumulative regret bound of $\mathcal{O}\left(\sqrt{T}(\log T)^{m+1}\right)$. Finally, we demonstrate the practical efficacy of our theoretical framework by formulating the Vehicle-to-Grid (V2G) incentive design problem, proving its equivalence to a dynamic principal-agent problem, and showing superior economic performance for grid aggregators.
\end{abstract}
\maketitle
\section{Introduction}
\label{sec:introduction}
We study the classical principal-agent problem \cite{laffont2002theory} in a dynamic setting. The broader significance of this problem—underscored by the 2016 Nobel Memorial Prize in Economic Sciences \cite{wang2023deep}—lies in its ability to model asymmetric information. Specifically, a principal sequentially designs contracts to incentivize a strategic agent whose actions are hidden (unobservable to the principal) but stochastically govern the task outcome and, consequently, the principal's payoff. Within the control community, such incentive design has historically been investigated through the lens of Stackelberg games \cite{ho1981information,ho1984interactions,lauffer2023no}, addressing applications ranging from demand response \cite{zhou2017incentive} to network congestion \cite{barrera2014dynamic}. While recent works have made rich theoretical strides in adaptive incentive design \cite{ratliff2020adaptive,maheshwari2025adaptive,lauffer2023no}, they frequently rely on restrictive assumptions that deviate from the classical setting. For instance, the approaches in \cite{ratliff2020adaptive} and \cite{lauffer2023no} propose no-regret learning algorithms for dynamic Stackelberg games, but fundamentally assume that the leader (principal) can directly observe the follower's (agent's) actions at each round, or that the follower's utility is linearly parameterized.

In contrast to these idealized assumptions, we preserve the structural integrity of the original principal-agent problem, specifically the core challenge of the agent's unobservable actions (known as \textit{moral hazard}). To navigate this, we cast the dynamic contract design problem as a structured multi-armed bandit problem \cite{mersereau2009structured,cope2009regret}. Bandit frameworks naturally accommodate sequential decision-making under partial feedback among the principal, the agent, and the environment \cite{asawa2002multi,zhu2023distributed,yang2025stochastically,liu2024thompson}. Recently, bandit-based algorithms have made foundational contributions to automated contract design on modern crowdsourcing platforms, such as \textit{Amazon Mechanical Turk} \cite{ho2014adaptive,harris2011you}. Beyond digital markets, this adaptive framework holds immense potential for decentralized cyber-physical systems, most notably Vehicle-to-Grid (V2G) technology \cite{liu2013opportunities}. V2G enables electric vehicles (EVs) to act as distributed energy storage, providing critical balancing services to the power grid. However, while the physical infrastructure is increasingly mature \cite{brooks2002vehicle}, widespread adoption is severely bottlenecked by misaligned economic incentives and unobservable user costs \cite{van2021factors}. Motivated by these research gaps, we focus on developing computationally efficient algorithms while modeling a principal-agent-compatible incentive framework for the end users of `V2G' technology.

Formally, in each round $t$, the principal posts a contract $x_t \in \mathcal{X}:= [0,1]^m$, that specifies an incentive payment $(x_t)_i$ for the $i$-th outcome of the task, with $i\in [m]:= \{1, \dots m\}$. An agent observes the contract, takes an action $a^{\star}_t \in \mathcal{A} $, and the outcome of the task is realized. The uncoupled utility (or payoff) function of the principal is $U^{\mathrm{P}}: \mathcal{X} \times \mathcal{A} \rightarrow \mathbb{R}$ and the utility function of the agent is $U^{\mathrm{A}}: \mathcal{X} \times \mathcal{A}\rightarrow \mathbb{R}$. We assume the agent is rational and always takes an action that maximizes its utility $U^{\mathrm{A}}$. For simplicity, dropping the time $t$ as a subscript, we compactly formulate the problem~as 
\begin{subequations}\label{agent_bp}
\begin{align}
 a^\star(x)
&:= \arg\max_{a \in \mathcal{A}}  U^{\mathrm{A}}(x,a),
\label{agent_br} \\
 x^\star
&:= \arg\max_{x \in \mathcal{X}} \; \Big\{U^{\mathrm{PA}}(x) := U^{\mathrm{P}}\!\bigl(x,a^{\star}(x)\bigr)\Big\}, \label{opt_contract_eq}
\end{align}
\end{subequations}
where $a^\star(x)$ is the best response of the agent against the contract $x$, $U^{\mathrm{PA}}: \mathcal{X}\rightarrow \mathbb{R}$ is the principal's utility with embedded best response of the agent, and $x^\star$ denotes the optimal contract. The main technical challenge is that the principal does not explicitly know $U^{\mathrm{PA}}$ because of the agent's hidden action $a^\star$. Our goal is to learn $x^\star$ via online random feedback from the principal's utility $U^{\mathrm{PA}}$. To quantify the sub-optimality of the principal's strategy in this sequential interaction, we use the traditional notion of cumulative regret as
\begin{align} \label{reg_def}
    R_T := T \cdot U^{\mathrm{PA}}(x^\star) - \sum_{t=1}^T U^{\mathrm{PA}}(x_t),
\end{align}
where $x^\star$ is as defined in \eqref{opt_contract_eq}. We are interested in designing an algorithm to propose a sequence of $\{x_t\}_{t \le T}$ using the available feedback information from the environment so that we ensure the upper bound of $R_T$ grows sub-linearly in $T$.
\subsection{Related literature}
\textit{Computational Contract Design.}
While the principal-agent problem is rooted in economic theory \cite{holmstrom1979moral}, recent literature has focused on its computational aspects. Authors in \cite{babaioff2006combinatorial} establish that computing optimal contracts in combinatorial settings is NP-hard in general. Hence, research shifted towards identifying robust, simple contract structures, such as linear contracts~\cite{dutting2019simple}, which provide approximation guarantees in static environments. However, these works assume that the principal has full knowledge of the agent's type and payoff function, an assumption that rarely holds in~practice.

\textit{Learning Contracts with Bandit Feedback.}
In the repeated principal-agent setting, the problem is modeled as a multi-armed bandit with a finite number of arms. Authors in \cite{ho2014adaptive} pioneer this direction with the \texttt{Agnostic Zooming} algorithm. Their approach relies on adaptively discretizing the contract space; however, the analysis necessitates strong regularity assumptions, specifically that contracts are monotone and outcomes satisfy first-order stochastic dominance. Authors in \cite{zhu2023sample} relaxed these conditions, offering guarantees for general utility functions. More recently, authors in \cite{bacchiocchi2023learning} proposed the \texttt{Discover and Cover} algorithm to address the problem with finite agent actions. While this approach avoids the geometric assumptions of earlier works, it suffers from the \textit{curse of dimensionality}, as the complexity scales exponentially with the number of agent actions.

\textit{Continuity of principal's utility.} The principal's expected utility $U^{\mathrm{PA}}$ is inherently discontinuous, formally characterized by \cite{wang2023deep} as a piecewise-affine function with sharp transitions. Prior dynamic algorithms \cite{ho2014adaptive, zhu2023sample} treat this discontinuity as an obstacle that requires inefficient contract-space discretization. In contrast, our work introduces a stochastic utility framework that smooths out these discontinuities, enabling the application of kernelized bandits \cite{chowdhury2017kernelized} directly in the continuous domain.
\begin{figure}
    \centering
    \includegraphics[width=0.7\linewidth]{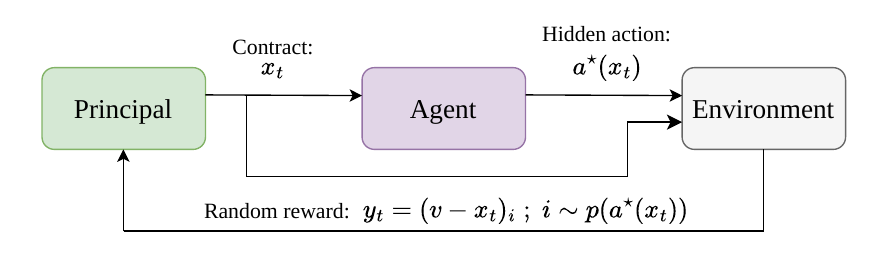}
    \caption{Schematic of principal-agent feedback model \eqref{agent_bp}. }
    \label{pa_fig}
\end{figure}

\subsection{Contributions}
\begin{enumerate}
\renewcommand{\labelenumi}{(\roman{enumi})}
\item \textbf{Modeling: Resolving discontinuity of $U^{\mathrm{PA}}$.} We identify that the discontinuity of the principal's utility $U^{\mathrm{PA}}$, a persistent barrier in dynamic contract design, stems from the idealized assumption of deterministic agent utilities. By incorporating a stochastic counterpart into the agent's utility function (capturing unmodeled physical transients and behavioral fluctuations), we prove that the principal's expected utility $\Tilde{U}^{\mathrm{PA}}$ becomes continuous over its domain (Theorem~\ref{main_lemma}). This modeling shift provides a more realistic representation of the problem and fundamentally resolves the need for the intractable discretization relied upon by existing state-of-the-art methods \cite{bacchiocchi2023learning,ho2014adaptive,zhu2023sample}.
\item \textbf{Algorithm: Sub-linear regret via tailored kernel.} Leveraging the continuity result of $\Tilde{U}^{\mathrm{PA}}$, we propose the \texttt{Heteroscedastic GP-UCB} algorithm paired with a Neural Network kernel, specifically chosen to capture the non-stationary, sigmoidal geometry of $\Tilde{U}^{\mathrm{PA}}$. By bounding the eigenvalue tail sum of this specific kernel (Proposition~\ref{eigen_val_lemma}), we establish a formal high-probability regret bound of: $\mathbb{P}\left[ \tilde{R}_T \leq \mathcal{O}\left( \sqrt{T}(\log T)^{m+1} \right ) \right] \geq 1-\delta$ (Theorem~\ref{main_theorem}), where $m$ denotes the dimension of contract space $\mathcal{X}$, and $\delta>0$ is the confidence level.
\item \textbf{Application: V2G incentive design.} Finally, we demonstrate the practical relevance of our framework by addressing the Vehicle-to-Grid (V2G) incentive design problem. We formulate a smart-charging scheme where aggregators incentivize electric vehicle owners to provide ancillary services. We prove (Proposition~\ref{prop_1}) that this complex engineering problem, governed by stochastic battery dynamics and user preferences, formally reduces to a dynamic principal-agent formulation.
\end{enumerate}
It is noteworthy to mention that our proposed algorithm overcomes the best-known regret bound of the prior work \cite{bacchiocchi2023learning} (i.e., $R_T \leq\mathcal{O}(m^nT^{4/5} \log T)$) by improving the original principal-agent problem formulation (i.e., from $U^{\mathrm{PA}}$ to $\Tilde{U}^{\mathrm{PA}}$). We do not claim any improvement on $R_T$ for the original (deterministic) problem formulation. Moreover, while $R_T$ of \cite{bacchiocchi2023learning} scales exponentially with the number of actions of the agent (i.e., $m^n$), the regret bound of our work (i.e., $\Tilde{R}_T$) entirely eliminates this exponential dependence on $n$, where $n$ and $m$ denote the number of actions of the agent and the number of outcomes, respectively. In a nutshell, our methodology establishes that adopting a more realistic system model enables a faster learning rate. 


\section{Principal-Agent Feedback Model} 
\label{sec:problem}
Building on the introduction, we now discuss the principal-agent model's formulation in detail. After the agent executes the best response following \eqref{agent_br}, an outcome $i \in [m]$ of the task is realized. The value for any outcome $i \in [m]$ for the principal is denoted by $(v)_i \in[0,1]$ and is known to the principal. Moreover, the outcome of a task is realized following a conditional probability distribution defined by the production function $p: \mathcal{A} \rightarrow \Delta_m$, where $\Delta_m$ is the probability simplex over $m$ outcomes. We define agents' and principals' utilities, respectively, 
\begin{subequations}\label{agent_utility}
\begin{align}
& U^{\mathrm{A}}(x,a)= \sum_{i\in [m]} (x)_i\;p_i(a)-c(a), 
\label{agent_utility_first} \\
&U^{\mathrm{P}}(x,a)= \sum_{i\in [m]}(v-x)_i \;p_i(a), \label{u^p_fcn}
\end{align}
\end{subequations}
where $p_i(a)$ is the probability of the $i$-th outcome conditional on action $a$, and $c: \mathcal{A}\rightarrow \mathbb{R}$ denotes the cost function of taking actions by the agent. As stated previously, $a^\star(x)$ is not observable to the principal; however, the principal observes the following random reward conditioned on $a^\star(x)$:
\begin{align}\label{reward_eq}
    y= (v-x)_i,\; \text{with the random index's law}\; i\sim p(a^\star(x)).
\end{align}
A schematic of the principal-agent model described in \eqref{agent_bp}-\eqref{reward_eq} is shown in Figure~\ref{pa_fig}. We now discuss state-of-the-art methods for addressing the principal-agent problem. Furthermore, we analyze the key technical challenges faced by existing solution approaches. 
\subsection{State-of-the-art results} 
Authors in \cite{ho2014adaptive} propose an adaptive algorithm called \texttt{Agnostic Zooming} for this problem. Their algorithm provides the following upper bound on regret: $R_T (X_{\text{cand}}) \leq \mathcal{O}(\alpha \log T)T^{\frac{m+1}{m+2}}$, where $X_{\text{cand}} \subset [0,1]^m$ represents the set of monotone contracts. In addition, they also assume \textit{first order stochastic dominance} of contracts, which makes their regret bound to be less attractive. 

Authors in \cite{zhu2023sample} recently proposed an algorithm where they lifted the monotonicity and \textit{first order stochastic dominance} assumptions of \cite{ho2014adaptive}. They provide the upper bound of the regret in expectation as $R_T \leq \mathcal{O}(\sqrt{m}\log(T) T^{\frac{2m}{2m+1}})$. However, their proposed algorithm cannot be implemented in reality. In fact, one of the steps in their algorithm requires $\mathcal{X}$ to be discretized using certain methods from spherical coding in information theory, which remains an open problem to date (Section $3.1$ of \cite{zhu2023sample}).

Recently, Authors in \cite{bacchiocchi2023learning} propose an algorithm considering $\mathcal{A}$ is finite. They provide a probabilistic regret bound $\mathbb{P}\left [ R_T \leq \mathcal{O}(m^n \mathcal{I}\log (T)T^{4/5}) \right ]\geq 1-\delta,$ where $\mathcal{I}$ depends on $m, n$ polynomially, and $n$ is the number of actions available to agent. Here, the constants deteriorate performance significantly as $m$ and $n$ increase. This algorithm requires the private parameter~$n$. Apart from the main results above, the survey \cite {dutting2024algorithmic} presents other weaker results on the computational aspects of contracts. 
\subsection{Main technical challenges} \label{sub_sec_discont.}
Assuming $\mathcal{A}$ is finite, the authors in \cite{wang2023deep} establish that the principal's utility function $U^{\mathrm{PA}}$ in \eqref{opt_contract_eq} is a piecewise affine function; it can be discontinuous on the boundary of linear pieces, and the global optimal contract is on the boundary of a linear piece. Let us first understand why $U^{\mathrm{PA}}$ is discontinuous on its domain with the help of a simple example. Our understanding of the root cause of this technical challenge will eventually help us to resolve it in later sections. 

Assuming $\mathcal{A}$ is finite, the utility for an agent is $U^{\mathrm{A}}(x,a) = \langle x,p(a) \rangle - c(a),$ where $a \in \mathcal{A} := \{a_1, a_2, \dots, a_n\}$. For a fixed contract $x$, the objective function $U^{\mathrm{A}}$ admits multiple optimizers only if $x$ lies on an indifference hyperplane between distinct actions $a_i$ and $a_j$, defined as $H_{i,j} := \left\{x \in \mathcal{X} \mid U^{\mathrm{A}}(x, a_i) = U^{\mathrm{A}}(x, a_j) \right\}.$ Consider a contract $x \in H_{i,j}$ such that the optimal action set is restricted to the pair $\{a_i, a_j\}$, meaning $U^{\mathrm{A}}(x, a_i) = U^{\mathrm{A}}(x, a_j) > U^{\mathrm{A}}(x, a_k)$ for all $k \notin \{i, j\}$. If $p(a_i) \neq p(a_j)$, then the principal's utility $U^{\mathrm{PA}}$ exhibits a discontinuity across the hyperplane $H_{i,j}$. Letting $h_{i,j}(x) := U^{\mathrm{A}}(x, a_i) - U^{\mathrm{A}}(x, a_j)$ denote the affine function defining the boundary, the function $U^{\mathrm{PA}}$ takes the following piecewise form in the local neighborhood of $x$:
\begin{equation}
\begin{aligned}
U^{\mathrm{PA}}(x)
&=  \langle (v-x),
   p(a^\star(x))\rangle + 
\begin{cases}
\langle (v-x),p(a_j) \rangle, & \text{if } h_{i,j}(x) < 0, \\[2pt]
\langle (v-x), p(a_i) \rangle, & \text{if } h_{i,j}(x) > 0.
\end{cases}
\end{aligned}
\end{equation}
Note that $\mathcal{X}$ is partitioned by at most $\frac{n(n-1)}{2}$ such discontinuous hyperplanes, corresponding to $n$ actions of the agent.
Let us now construct an instance of the so-called `\textit{High-low example}' \cite{ho2014adaptive} to empirically visualize the landscape of $U^{\mathrm{PA}}$. 
\begin{example}[High-low case study] \label{high_low_example}
The principal posts a contract $x \in [0,1]^2$ with two outcomes (low and high), and the agent takes action $a \in \{a^{\text{l}}(\text{low effort}), $ $ a^{\text{h}}(\text{high effort})\}$ with corresponding private costs $c(a^{\text{l}}),c(a^{\text{h}})$, respectively. Low and high outcomes bring respective values $v=[v^{\text{l}},v^{\text{h}}]^{\top}$ to the principal. To empirically construct $U^{\mathrm{PA}}$, we consider $p(a^{\text{l}})=[0.9,0.1],\:p(a^{\text{h}})=[0.1,0.9], \: c(a^{\text{l}})=0, \: c(a^{\text{h}})=0.3$, and $v=[0.1,0.8]^{\top}$. Then we discretize $[0,1]^2$ with $(1000 \times 1000)$ data points and sample each data point $1000$ times to empirically approximate $U^{\mathrm{PA}}$. Figure~\ref{jump_fig} illustrates the geometry of the empirically constructed ${U}^\mathrm{PA}$. The $x$-axis represents contract $(x)_1$ for low outcome, and the $y$-axis denotes $(x)_2$ for high outcome. Unsurprisingly, we find it to be discontinuous between two affine pieces.
\end{example}

\begin{figure*}
 	\begin{subfigure}{0.3\linewidth}
    		\centering
    		\includegraphics[width=1.3\linewidth]{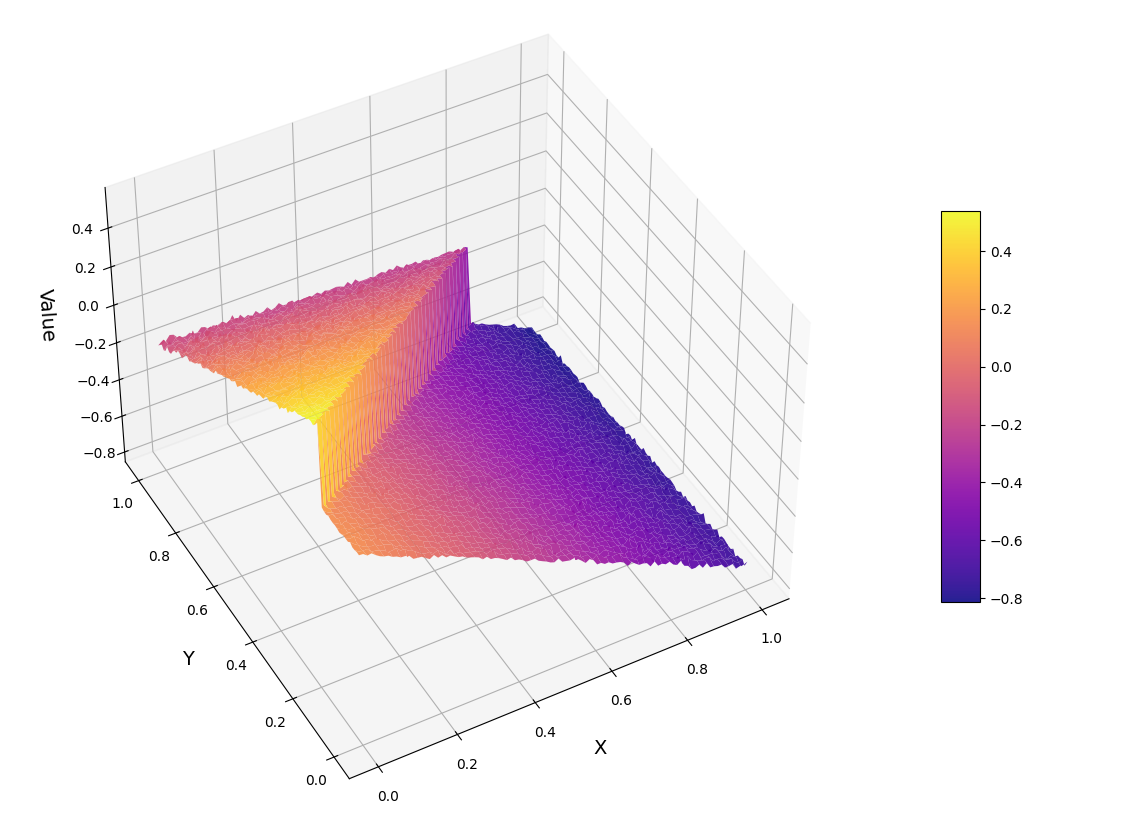} 
            \caption{}
            \label{jump_fig}
	\end{subfigure}
    \begin{subfigure}{0.3\linewidth}
            \centering
            \includegraphics[width=1.3\linewidth]{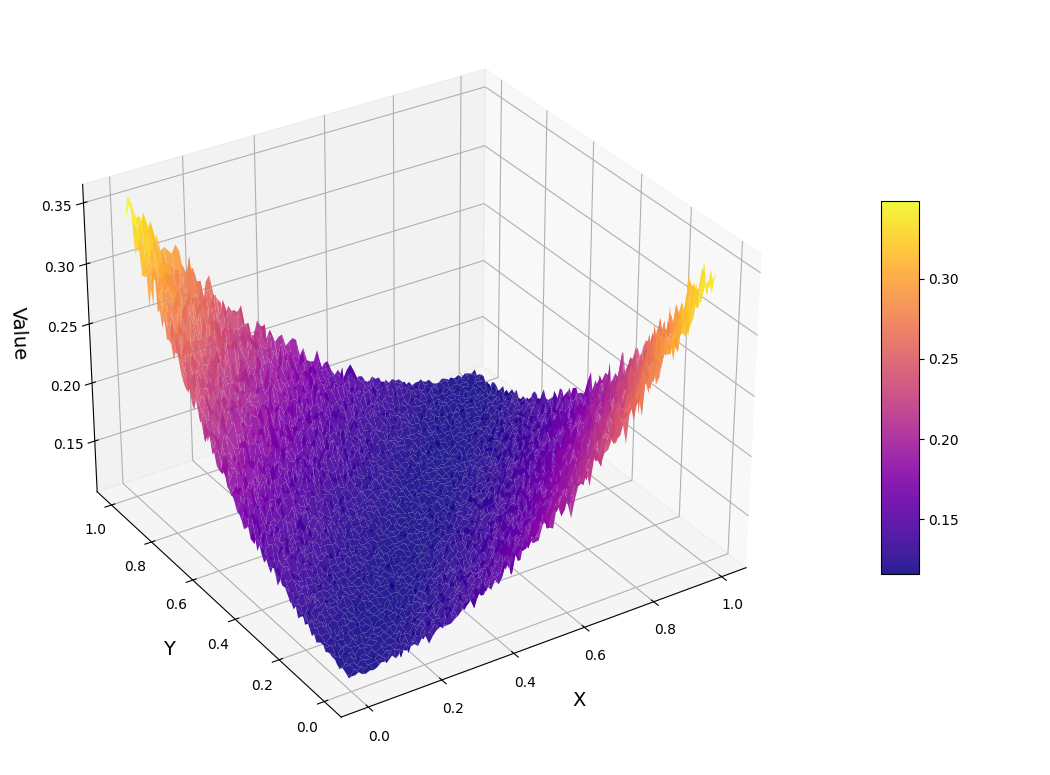}
            \caption{}
            \label{utility_var_fig}
    \end{subfigure}
    \begin{subfigure}{0.3\linewidth}
    		\centering
    		\includegraphics[width=1.3\linewidth]{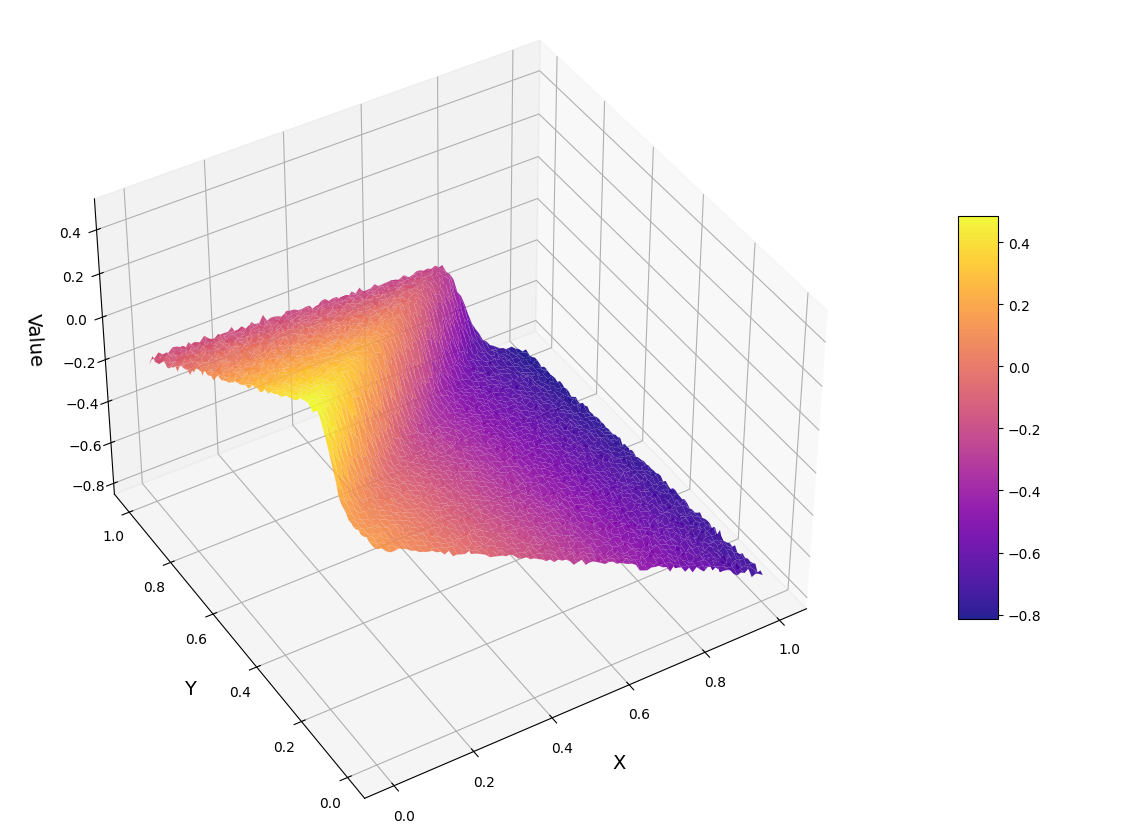} 
            \caption{}
            \label{no_jump_fig}
	\end{subfigure}
	\caption{\small \textit{(High-low example ~\ref{high_low_example})} (a) Visualization of ${U}^\mathrm{PA}$. (b) Visualization of the variance of \textit{heteroscedastic} noise $\sigma^2_{\varepsilon}$. (c) Geometry of $\Tilde{U}^\mathrm{PA}$ under the Assumption ~\ref{main_assumption}. }
\end{figure*}

The second technical challenge is the decision-variable-dependent randomness in the principal's reward structure. At round $t$, we can rewrite the principal's reward in \eqref{reward_eq} in the following form: 
\begin{align} \label{inp_rand}
y_t &= (v-x_t)_i, \quad i \sim p(a^\star(x_t)), \nonumber \\
    &= \sum_{j\in [m]} (v-x_t)_jp_j(a^\star(x_t)) + \varepsilon(x_t)  
     = U^{\mathrm{PA}}(x_t) + \varepsilon(x_t),
\end{align}
where $\varepsilon(x_t)$ is the zero-mean \textit{heteroscedastic} noise with variance $\mathbb{V}ar[\varepsilon(x_t)]= \sum_{j\in [m]} p_j(a^\star(x_t))[(v-x_t)_j- U^{\mathrm{PA}}(x_t)]^2$. Unlike homoscedastic models, the noise variance here depends on $x_t$ and is unknown to the decision maker. Figure~\ref{utility_var_fig} illustrates this input-dependent variance, $\sigma^2_{\varepsilon}$, for Example ~\ref{high_low_example}.
\section{Main Result I: Resolving Discontinuity of $U^{\mathrm{PA}}$}
Due to the aforementioned challenges, the authors in \cite{bacchiocchi2023learning,ho2014adaptive,zhu2023sample} prefer to discretize the contract space $\mathcal{X}$ and deal with a finite number of contracts. Instead, our approach is not to discretize $\mathcal{X}$, but to design algorithms that can work in a continuous contract space, thereby freeing us from the technicality of upper-bounding the discretization error. In this regard, we first make the following improvement to the agent's utility. 

Consider an embedding $\delta: \mathcal{A}\rightarrow \mathbb{R}^n$, and an $\mathbb{R}^n$-valued random variable $w$. We define the agent's best response by adding a stochastic counterpart to the agent's utility function, and update the optimal contract, respectively,
\begin{subequations}
    \begin{align} \label{modified_bp}
    \Tilde{a}^\star(x,w)& = \arg\max_{a \in \mathcal{A}} \left\{\tilde{U}^{\mathrm{A}}(x,a) = U^{\mathrm{A}}(x,a)+ \langle w,\delta(a)\rangle \right\},\\ \label{final_utility}
    \tilde{x}^\star &= \arg\max_{x \in \mathcal{X}} \left\{ \Tilde{U}^{\mathrm{PA}}(x)= \mathbb{E}_{w}[U^{\mathrm{P}}(x,\Tilde{a}^\star(x,w)) ] \right\},
   \end{align}
\end{subequations}
while the random feedback observed by the principal
\begin{align} \label{new_prod_fcn}
    \tilde{y} &= (v-x)_i , \quad i \sim q(x), \\ \nonumber
    \mathrm{where}\quad q_i(x)&= \mathbb{E}_w\left[p_i(\Tilde{a}^\star(x,w))\right] .   
\end{align}
Note, the updated production function $q(x)$ encodes the probability of any realization $i \in [m]$ by marginalizing over randomness related to $w$. Moreover, the regret in \eqref{reg_def} is updated as
\begin{equation} \label{reg_final}
    \Tilde{R}_T= T\Tilde{U}^{\mathrm{P}*} - \sum_{t=1}^T \Tilde{U}^{\mathrm{PA}}(x_t),
\end{equation}
where $\Tilde{U}^{\mathrm{P}*}= \max_{x \in \mathcal{X}} \:\: \Tilde{U}^{\mathrm{PA}}(x)$.

While the modification in \eqref{modified_bp} serves to restore the continuity of the principal's utility, it fundamentally reflects the inherent uncertainty of real-world agent behavior. In \cite{bacchiocchi2023learning,wang2023deep}, agent utilities are assumed to be static for a fixed $x$ and $a$. However, in practical settings like a crowdsourcing market, a worker's cost (effort) fluctuates due to unobserved latent factors such as fatigue, attention shifts, or external interruptions. Therefore, for the same contract and action pair $(x,a)$, the agent's utility could take different values for different instances of time. By introducing stochasticity into the agent's utility function, we effectively treat the principal's problem as an optimization over the \textit{robust expected behavior} of the agent, rather than overfitting to a deterministic model that inherently fails to capture such physical transients.

\begin{assumption}[Stochastic regularity] \label{main_assumption}
For the stochastic addition in the best response \eqref{modified_bp}, we assume the following: 
    \begin{enumerate}\renewcommand{\labelenumi}{(\roman{enumi})}
        \item \textit{One-to-one embedding:} For all $a_1,a_2 \in \mathcal{A}$, there exists a continuous embedding $\delta: \mathcal{A} \rightarrow \mathbb{R}^n$ such that $\delta(a_1)\neq \delta(a_2)$ if $a_1 \neq a_2$. 
        \item \textit{Full-dimensional support:} The random variable $w\in \mathbb{R}^n$ admits a probability density distribution that is absolutely continuous with respect to Lebesgue measure. In particular, for all $v_1\in \mathbb{R}^n, v_2 \in \mathbb{R}$, the probability of any hyperplane is $\mathbb{P}\left[w \in \mathbb{R}^n \mid \; \langle v_1,w \rangle=v_2 \right]=0$ (e.g., uniform distribution supported on $[0,1]^n$).
    \end{enumerate} 
\end{assumption}
\begin{example}[Embedding] 
If $\mathcal{A}$ is finite, then it suffices to choose the embedding image (i.e., $n=1$) as $\delta(a_i)= (i-1)/|\mathcal{A}|$. If $\mathcal{A}$ is a continuous set, a straightforward embedding is the identity map: $\delta(a)=a$.
To visualize an example of the proposed modification, consider $U^{\mathrm{A}}(x,a)= x- (a^4-2a^2)$ for $x=0$, and Figure~\ref{visual_main_assump} depicts how the optimizer $a^\star$ shifts to $\tilde{a}^\star$ as we add the identity embedding. 
\end{example}
\begin{figure}[htpb]
    \centering
    \begin{tikzpicture}
        
        \begin{axis}[
            name=plot1,
            width=7.5cm, height=5.0cm, 
            domain=-2:2, samples=100,
            axis lines=center,
            enlargelimits=true,
            xlabel={$a$},
            ylabel={$U^{\mathrm{A}}(x,a)$},
            xlabel style={anchor=north},
            ylabel style={anchor=east},
            ymin=-2, ymax=2.5,
            xtick=\empty,
            ytick={1}, 
            yticklabels={}, 
            title={}
        ]
            \addplot [blue, thick] {2*x^2 - x^4};
            
            \draw[dashed, red, thick] (axis cs:-2.2, 1) -- (axis cs:2.2, 1);
            
            \draw[dotted, thick, blue] (axis cs:1, 0) -- (axis cs:1, 1);
            \fill[blue] (axis cs:1, 1) circle (1.5pt);
            \node[below, blue, inner sep=3pt] at (axis cs:1, 0) {$a^\star$};
            
            \draw[dotted, thick, blue] (axis cs:-1, 0) -- (axis cs:-1, 1);
            \fill[blue] (axis cs:-1, 1) circle (1.5pt);
            
        \end{axis}

        \begin{axis}[
            name=plot2,
            at={(plot1.east)}, anchor=west, xshift=1.0cm, 
            width=7.5cm, height=5.0cm, 
            domain=-2:2, samples=100,
            axis lines=center,
            enlargelimits=true,
            xlabel={$a$},
            ylabel={$\tilde{U}^{\mathrm{A}}(x,a)$},
            xlabel style={anchor=north},
            ylabel style={anchor=east},
            ymin=-2, ymax=2.5,
            xtick=\empty,
            ytick={1.31}, 
            yticklabels={}, 
            legend pos=north east,
            title={}
        ]
            \addplot [purple, thick] {2*x^2 - x^4 + 0.3*x};
            \addlegendentry{$w = 0.3$}
            
            \draw[dashed, purple, thick] (axis cs:-2.2, 1.305) -- (axis cs:2.2, 1.305);
            
            \draw[dotted, thick, blue] (axis cs:1, 0) -- (axis cs:1, 1.3);
            \fill[blue] (axis cs:1, 1.3) circle (1.5pt);
            \draw[<-, blue, semithick] (axis cs:0.98, -0.05) -- (axis cs:0.5, -0.6) node[below] {$a^\star$};

            \draw[dotted, thick, purple] (axis cs:1.036, 0) -- (axis cs:1.036, 1.305);
            \fill[purple] (axis cs:1.036, 1.305) circle (1.5pt);
            \draw[<-, purple, semithick] (axis cs:1.056, -0.05) -- (axis cs:1.4, -0.6) node[below] {$\tilde{a}^\star$};

        \end{axis}

    \end{tikzpicture}
    \caption{Left: The agent's utility $U^{\mathrm{A}}(x,a)$ with optimizer marked at $a^\star$. Right: The modified utility $\tilde{U}^{\mathrm{A}}(x,a):= U^{\mathrm{A}}(x,a) + \langle w,a \rangle$ with a realization $w=0.3$. Angled pointers clearly distinguish the original optimizer $a^\star$ from the newly shifted optimizer $\tilde{a}^\star$.}
    \label{visual_main_assump}
\end{figure}

We also assume the following standard property for $U^{\mathrm{A}}$ and $U^{\mathrm{P}}$. 
\begin{assumption}[Compact and continuous domain]\label{cont_agent_assum}
    The agent's utility $U^{\mathrm{A}}: \mathcal{X}\times \mathcal{A}\rightarrow \mathbb{R}$ and the uncoupled principal's utility $U^{\mathrm{P}}: \mathcal{X}\times \mathcal{A}\rightarrow \mathbb{R}$ are continuous on their compact domains $\mathcal{X}$ and $\mathcal{A}$. 
\end{assumption}
The above assumptions allows us to characterize the following property of the correspondence $\tilde{a}^\star(x,w)$ in \eqref{modified_bp} and the principal's utility $\Tilde{U}^{\mathrm{PA}}$. 
\begin{theorem}[Continuous principal's utility]\label{main_lemma}
     Under Assumptions~\ref{main_assumption} and ~\ref{cont_agent_assum}, the following holds:
     \begin{enumerate}\renewcommand{\labelenumi}{(\roman{enumi})}
         \item The agent's best response $\tilde{a}^\star(x,w)$ in \eqref{modified_bp} is a singleton and continuous at any $x \in \mathcal{X}$ almost surely, i.e., for any $x \in \mathcal{X}$, we have
         \begin{align}
            \mathbb{P}\Big[ & w\in \Omega \mid   \; \tilde{a}^\star(x,w) \mathrm{\;is\;singleton\;and\;}  
             \forall a_n \in \tilde{a}^\star(x_n,w), \lim_{n \rightarrow \infty} a_n = \tilde{a}^\star(x,w) \Big]= 1.
        \end{align}
         \item The principal's utility $\Tilde{U}^{\mathrm{PA}}$ defined in \eqref{final_utility} is continuous on its domain.
     \end{enumerate}
\end{theorem}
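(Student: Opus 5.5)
The plan is to obtain (i) from the uniqueness of the perturbed maximizer together with Berge's maximum theorem, and then to get (ii) from (i) by dominated convergence.

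\textbf{Step 1 (uniqueness for fixed $x$).} Fix $x\in\mathcal{X}$. By Assumption~\ref{cont_agent_assum}, $a\mapsto U^{\mathrm{A}}(x,a)+\langle w,\delta(a)\rangle$ is continuous on the compact set $\mathcal{A}$, so $\tilde{a}^\star(x,w)$ is nonempty and compact. It remains to show that the set $B_x$ of $w$ giving two distinct maximizers is null. For finite $\mathcal{A}$ this is immediate. A tie between $a_i\neq a_j$ forces
\begin{align*}
\langle w,\delta(a_i)-\delta(a_j)\rangle = U^{\mathrm{A}}(x,a_j)-U^{\mathrm{A}}(x,a_i).
\end{align*}
By Assumption~\ref{main_assumption}(i), $\delta(a_i)-\delta(a_j)\neq 0$, so this set is a genuine hyperplane. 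It has probability zero by Assumption~\ref{main_assumption}(ii), and a finite union of such sets is still null.

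For a general compact $\mathcal{A}$ a countable-union argument is not directly available. Here I would use the convex-analytic route. Define
\begin{align*}
\phi_x(w):=\max_{a\in\mathcal{A}}\{U^{\mathrm{A}}(x,a)+\langle w,\delta(a)\rangle\}.
\end{align*}
This is a finite convex function of $w$, being a maximum of affine functions. By Danskin's theorem, its subdifferential at $w$ is the convex hull of $\{\delta(a): a\in\tilde{a}^\star(x,w)\}$. By Rademacher/Alexandrov, a convex function is differentiable Lebesgue-a.e., so the subdifferential is a singleton for a.e.\ $w$. When it is a singleton, all maximizers share the same image $\delta(a)$, and injectivity of $\delta$ gives a unique maximizer. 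Absolute continuity of the law of $w$ then gives $\mathbb{P}[B_x]=0$. This step is the main obstacle. The hyperplane condition stated in Assumption~\ref{main_assumption}(ii) only suffices for finite $\mathcal{A}$; in general one really needs the absolute continuity assumption.

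\textbf{Step 2 (continuity of the argmax).} Fix $w\notin B_x$ and let $x_n\to x$. The map $(x,a)\mapsto U^{\mathrm{A}}(x,a)+\langle w,\delta(a)\rangle$ is jointly continuous, because $U^{\mathrm{A}}$ and $\delta$ are continuous. The constraint set $\mathcal{A}$ is constant and compact. Berge's maximum theorem therefore gives that $x\mapsto\tilde{a}^\star(x,w)$ is upper hemicontinuous with compact values.

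Now take any $a_n\in\tilde{a}^\star(x_n,w)$. Every subsequence has a further convergent subsequence by compactness. By closedness of the graph, its limit lies in $\tilde{a}^\star(x,w)$, which is the singleton $\{\tilde{a}^\star(x,w)\}$. Hence $a_n\to\tilde{a}^\star(x,w)$. The event in (i) therefore contains $\Omega\setminus B_x$ and has probability one. No measurability issue arises, because $B_x$ is null and we work with the completed measure.

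\textbf{Step 3 (continuity of $\tilde{U}^{\mathrm{PA}}$).} Let $x_n\to x$. For $w\notin B_x$, Step~2 gives $\tilde{a}^\star(x_n,w)\to\tilde{a}^\star(x,w)$ for any selection used at the points $x_n$; these points may themselves carry ties, but that only happens on null sets $B_{x_n}$, and countably many null sets are negligible. Joint continuity of $U^{\mathrm{P}}$ then yields
\begin{align*}
U^{\mathrm{P}}(x_n,\tilde{a}^\star(x_n,w))\to U^{\mathrm{P}}(x,\tilde{a}^\star(x,w)) \quad \text{a.s.}
\end{align*}
Since $U^{\mathrm{P}}$ is continuous on the compact set $\mathcal{X}\times\mathcal{A}$, it is bounded by some constant $M$. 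Dominated convergence then gives $\tilde{U}^{\mathrm{PA}}(x_n)\to\tilde{U}^{\mathrm{PA}}(x)$. Sequential continuity on the metric space $\mathcal{X}$ proves (ii). Measurability of $w\mapsto U^{\mathrm{P}}(x,\tilde{a}^\star(x,w))$ follows from a measurable selection of the argmax (Kuratowski–Ryll-Nardzewski), which is unique a.s.
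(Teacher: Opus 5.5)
Your proof is correct and follows the same route as the paper. The shared steps are: Danskin's formula for the subdifferential of the convex value function in $w$; a.e.\ differentiability (Rademacher) together with absolute continuity of the law of $w$, giving a.s.\ uniqueness; Berge's maximum theorem plus uniqueness, giving continuity of $\tilde{a}^\star(\cdot,w)$; and dominated convergence for part (ii). Your additions do not change the approach but fill in points the paper leaves implicit: the elementary hyperplane argument for finite $\mathcal{A}$, the explicit use of injectivity of $\delta$ to pass from a singleton subdifferential to a unique maximizer, the subsequence argument, and the measurable-selection remark.
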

\begin{proof}
The proof is decomposed into two parts:

\textit{  Part $(i)$.}
Define the value function
\[
V(x,w):= \max_{a \in \mathcal{A}} \left\{U^{\mathrm{A}}(x,a)+ \langle w,\delta(a)\rangle \right\}.
\]
Since $\mathcal{A}$ is compact and $U^{\mathrm{A}}$ is continuous on its domain, $V$ is finite everywhere. Moreover, for a fixed $x$, $V$ is convex in $w$ as it is the maximum of a family of affine functions. By an extension result from Danskin's theorem~\cite{bertsekas1971control}, we have
\[
\partial_w V(x,w)=\operatorname{co}\bigl\{ \delta(a)\mid a\in \tilde{a}^\star(x,w)\bigr\},
\]
where $\partial_w V(x,w)$ denotes the convex subdifferential of $V$ with respect to $w$, and
$\operatorname{co}\{\cdot\}$ denotes the convex hull of its input.
Consequently, we have
\[
V \text{ is differentiable at } x \text{ and } w
\;\;\Longleftrightarrow\;\;
\tilde{a}^\star(x,w) \text{ is a singleton}.
\]
From Rademacher's theorem \cite[Section $3.1$]{evans2025measure}, we know that if the function~$V$ is a finite convex function on $\mathbb{R}^n$, which is the case here, it is differentiable almost everywhere with respect to Lebesgue measure. Therefore, the set
\[
N:=\{w\in\mathbb{R}^n:\tilde{a}^\star(x,w)\text{ is not a singleton}\}
\]
has Lebesgue measure zero. Because the distribution of $w$ is absolutely continuous with respect to Lebesgue measure (following Assumption~\ref{main_assumption}), then
\begin{align} \label{unique_res}
 \mathbb{P}[w \in N]=0 
\quad\Longleftrightarrow\quad 
\mathbb{P}\bigl[\tilde{a}^\star(x,w)\text{ is a singleton}\bigr]=1   
\end{align}
By Assumptions~\ref{cont_agent_assum} and~\ref{main_assumption}, $(U^{\mathrm{A}}(x,a) + \langle w,\delta(a)\rangle)$ is jointly continuous in $x$ and $a$, which leads us to apply the celebrated Berge's maximum theorem \cite[Theorem $17.31$]{aliprantis2006infinite} on \eqref{modified_bp}. And we say that the correspondence $\Tilde{a}^\star(x,w)$ is upper-hemicontinuous at $x$. At the same time, we just proved in \eqref{unique_res} that $\Tilde{a}^\star(x,w)$ is unique. Thus, $\Tilde{a}^\star(x,w)$ is continuous on $x$. 

\textit{Part $(ii)$.} 
For any $x \in \mathcal{X}$, and any convergent sequence of $\{x_n\}\rightarrow x$, we have
\begin{align*}
    \lim_{x_n \rightarrow x} \Tilde{U}^{\mathrm{PA}}(x_n) = \lim_{x_n \rightarrow x} \mathbb{E}_{w}[U^{\mathrm{P}}(x_n,\Tilde{a}^\star(x_n,w)) ] 
    &= \mathbb{E}_w\left[\lim_{x_n \rightarrow x}U^{\mathrm{P}}(x_n,\Tilde{a}^\star(x_n,w))  \right] \\
    &= \mathbb{E}_w\left[U^{\mathrm{P}}(x,\Tilde{a}^\star(x,w))  \right]= \Tilde{U}^{\mathrm{PA}}(x) 
\end{align*}
where the second equality follows from the Dominated Convergence Theorem \cite[Theorem $1.19$]{evans2025measure}, as the function $U^{\mathrm{P}}$ is continuous (Assumption~\ref{cont_agent_assum}), and hence uniformly bounded over any compact set. The third equality holds true because of the continuity of $\tilde{a}^\star(x,w)$ in $x$ as shown in the proof of \textit{part} $(i)$. This concludes the proof. 
\end{proof}
It is noteworthy to mention that Theorem~\ref{main_lemma} is built using the abstract formulation in \eqref{agent_bp}, which implies that our result is also applicable to the class of repeated Stackelberg games \cite{lauffer2023no}. Any other utility structure besides the principal-agent model in \eqref{agent_utility} will follow Theorem ~\ref{main_lemma} as long as the necessary assumptions are met. To address the technical challenge of input-dependent randomness in \eqref{inp_rand}, we propose a dedicated algorithm in Section~\ref{hgp_section}.
\section{Main Result II: Tailored Kernel and Sub-linear Regret} \label{sec_kernel_algo}
In this section, we focus on designing an algorithm for the principal-agent model given by \eqref{agent_utility} and \eqref{reward_eq}. Since the previous work \cite{wang2023deep} has already established geometric characterizations of the principal's utility landscape with finite agent actions, we aim to exploit this to achieve a better regret rate. To this end, we restrict the complexity of our problem with the following assumption.   
\begin{assumption}[Finite action space] \label{first_assumption}
    The agent has a finite number of actions, i.e., $|\mathcal{A}|< \infty$.
\end{assumption}
By Assumption~\ref{first_assumption}, a candidate embedding can be $\delta(a_j)= (j-1)/|\mathcal{A}|$ for $a_j \in \mathcal{A}$, which leads to the agent's best response $\Tilde{a}^\star(x,w)= \arg\max_{a_j \in \mathcal{A}} \left\{\sum_{i\in [m]} (x)_i\;p_i(a_j)-c(a_j)+ \frac{w(j-1)}{|\mathcal{A}|} \right\},$ and the principal's utility, $\Tilde{U}^{\mathrm{PA}}(x)= \mathbb{E}_w \Big [ \sum_{i\in [m]}(v-x)_i \;p_i(\Tilde{a}^\star(x,w))\Big ]$. We now visualize $\Tilde{U}^{\mathrm{PA}}$ using Example~\ref{high_low_example} where $|\mathcal{A}|=2$. Assuming that $w \sim \mathcal{N}(0,\sigma_w^2)$, Figure~\ref{no_jump_fig} shows a continuous $\Tilde{U}^{\mathrm{PA}}$ as stated in Theorem~\ref{main_lemma}. In this regard, we identify two key observations: $(1)$ the surface of $\Tilde{U}^{\mathrm{PA}}$  consists of flat plateaus and steep transitions (see Figure~\ref{no_jump_fig} ), and $(2)$ the variance around $\Tilde{U}^{\mathrm{PA}}$ is still decision-variable-dependent as given in \eqref{inp_rand}. To address both observations, we consider stochastic bandits \cite{lattimore2020bandit} over a continuous domain. While parametric bandits \cite{rusmevichientong2010linearly} assume linear reward structures, akin to the simpler theory of linear contracts, we adopt a nonparametric approach using Gaussian Processes (GPs) to capture the complex structure of the principal's utility. 

A GP over the feasibility set $\mathcal{X}=[0,1]^m$, denoted by $GP_{\mathcal{X}}(\mu_t(\cdot),k(\cdot,\cdot))$, is a collection of random variables $(\Tilde{U}^{\mathrm{PA}}(x))_{x \in \mathcal{X}}$, such that every finite sub-collection of random variables $(\Tilde{U}^{\mathrm{PA}}(x_i))_{i=1}^n$ is jointly Gaussian with mean $\mu_t(x_i) = \mathbb{E}[\Tilde{U}^{\mathrm{PA}}(x_i)]$ and covariance $k(x_i,x_j) =  \mathbb{E}[(\Tilde{U}^{\mathrm{PA}}(x_i)-\mu_t(x_i))(\Tilde{U}^{\mathrm{PA}}(x_j)-\mu_t(x_j))]$, $1\leq i,j \leq n,n \in \mathbb{N}$. Algorithms use $GP_{\mathcal{X}}(0,k(\cdot,\cdot))$ as an initial prior distribution for the unknown function $\Tilde{U}^{\mathrm{PA}}$ over $\mathcal{X}$, where $k(\cdot,\cdot)$ is the kernel associated with a suitable Reproducing Kernel Hilbert Space (RKHS) $\mathcal{H}_k$. Algorithms such as GP-UCB \cite{srinivas2012information}, GP-PI \cite{hoffman2011portfolio}, and IGP-UCB \cite{chowdhury2017kernelized} leverage GPs by imposing regularity assumptions on the objective function $\Tilde{U}^{\mathrm{PA}}$: either (1) $\Tilde{U}^{\mathrm{PA}}$ is a sample from a GP prior, or (2) $\Tilde{U}^{\mathrm{PA}}$ resides in the RKHS $\mathcal{H}_k$ of a kernel $k$. Given the structured geometry of $\Tilde{U}^{\mathrm{PA}}$, the agnostic RKHS assumption is theoretically more robust than assuming the function is a random draw from a GP. Consequently, the focus is to identify a kernel $k$ whose RKHS $\mathcal{H}_k$ naturally models flat plateaus and steep transitions.
\subsection{Kernel design} \label{nn_kernel_sec}
Standard stationary kernels, such as the Gaussian kernel, are ill-suited for this problem as their reliance on a single, global lengthscale fails to adapt to the heterogeneous geometry of $\Tilde{U}^{\mathrm{PA}}$. While the underlying deterministic utility ($U^{\mathrm{PA}}$) exhibits distinct affine segments, our stochastic smoothing transforms them into a non-stationary landscape characterized by flat plateaus and steep transitions (Figure~\ref{no_jump_fig}). To mimic this structure, we employ the \textit{Arcsin kernel} (aka Neural Network kernel) \cite[Section $3$]{williams1998computation}, which arises as the limit of a single-hidden-layer Bayesian Neural Network with error function ($\text{erf}$) activation. Unlike stationary kernels, this kernel induces a non-stationary prior inherently designed to model sigmoidal superpositions. This inductive bias allows it to represent the soft thresholding behavior and varying local smoothness of the principal's expected~utility.

Consider the input-output mapping of a single-hidden-layer neural network $f_N:\mathbb{R}^n \rightarrow \mathbb{R}$ with an activation function $\phi: \mathbb{R}\rightarrow \mathbb{R} $, defined as  
\begin{subequations}
    \begin{align}
        \phi(z)& = \frac{2}{\sqrt{\pi}}\int_0^z \; \exp{(-t^2)} dt, \\
        f_N(x) &= \sum_{i=1}^N \varsigma_i \phi(\langle r_i, x\rangle + b_i),
    \end{align}
\end{subequations}
where the independent Gaussian priors $\varsigma_i \sim \mathcal{N}(0, \sigma_{\varsigma}^2/N)$, $r_i \sim \mathcal{N}(0, \sigma_v^2 \mathbf{I})$, and $b_i \sim \mathcal{N}(0, \sigma_b^2)$. The kernel function $k(x, y)$ is derived as the limiting covariance
\begin{align*} 
    k(x, y) &:= \lim_{N \to \infty} \mathbb{E}[f_N(x) f_N(y)]= \sigma_{\varsigma}^2 \mathbb{E}_{z_x, z_y}[\phi(z_x) \phi(z_y)],
\end{align*}
where the last equality follows by applying the central limit theorem,  $z_x = \langle r,  x\rangle + b$ and $z_y = \langle r,y\rangle + b$ are jointly Gaussian variables. The expectation in the above equation admits a closed-form solution known as the Neural Network (aka `arcsin') kernel \cite{williams1998computation}, described by
\begin{align} \label{kernel_eq}
    & k(x, y)  
     =\frac{2\sigma_{\varsigma}^2}{\pi} \arcsin \left( \frac{\sigma_v^2 \langle x, y \rangle + \sigma_b^2}{\sqrt{(1 + \sigma_v^2 \langle x, x \rangle + \sigma_b^2) (1 + \sigma_v^2 \langle y, y \rangle + \sigma_b^2)}} \right),
\end{align}
where $\sigma^2_v$ and $\sigma^2_b$ are treated as hyper-parameters. Apart from being symmetric, the positive definiteness of this kernel is guaranteed by its construction as an expectation of squares, $\sum_{i,j} c_i k(x_i,x_j)c_j = \sigma^2_w \mathbb{E}\left[\left(\sum_i c_i\phi(z_{x_i})\right)^2\right] \geq 0$. By the Moore-Aronszajn theorem, any symmetric and positive definite kernel on $\mathcal{X}$ induces a unique RKHS $\mathcal{H}_k$ defined as
\begin{align*}
    \mathcal{H}_k& := \Big\{ h \in \mathbb{R}^{\mathcal{X}} \mid \exists \alpha_i\in \mathbb{R}, x_i \in \mathcal{X}\; \forall i \in \mathbb{N} \; \mathrm{with}\; 
     h(x)= \sum_{i=1}^\infty \alpha_i k(x_i,x)\; \mathrm{and}\; \sum_{i=1}^\infty \sum_{j=1}^\infty \alpha_i k(x_i,x_j)\alpha_j < \infty   \Big\}.
\end{align*}
To empirically validate the correct hypothesis space of $\Tilde{U}^{\mathrm{PA}}$, we run a kernel regression on an instance of `High-low' example (similar to Example~\ref{high_low_example}) with $k(x,y)$ in \eqref{kernel_eq}. We plot the learned function in Figure~\ref{nn_kernel_func}, which clearly shows the flat plateaus with sigmoidal transitions we were looking for. Therefore, the $\mathcal{H}_{k}$ induced by $k(x,y)$ in \eqref{kernel_eq} is a candidate hypothesis space for $\Tilde{U}^{\mathrm{PA}}$. 
\begin{figure}
    \centering
    \includegraphics[width=0.6\linewidth]{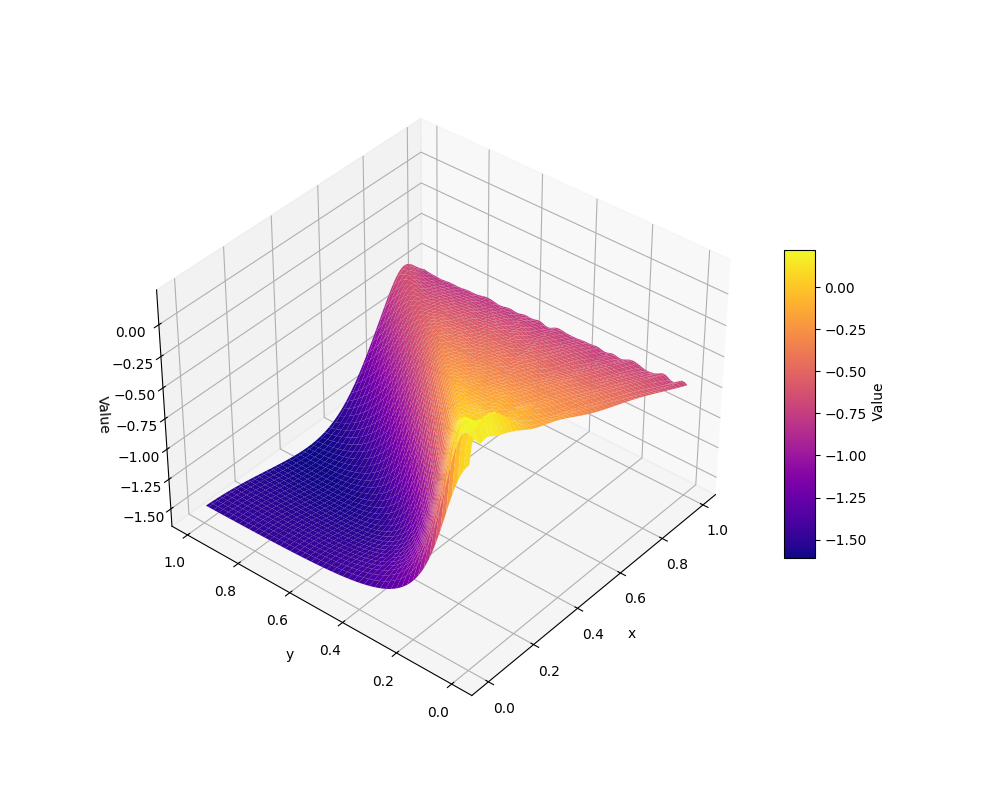}
    \caption{Representative function of $\Tilde{U}^{\mathrm{PA}}$, which is built with Neural Network kernel using kernel regression where the empirical samples are generated from an instance of `High-low' example~\ref{high_low_example}.}
    \label{nn_kernel_func}
\end{figure}
\subsection{Description of the algorithm} \label{hgp_section}
First, we develop methods to address the \textit{heteroscedastic} noise in the principal's utility, which is denoted as a technical challenge in \eqref{inp_rand}. In particular, we learn the unknown variance of this noise using feedback from repeated sampling of the principal's utility at the same input. A good estimate of the variance helps us obtain a more accurate upper confidence bound for the unknown $\Tilde{U}^{\mathrm{PA}}$. Then we detail our main algorithm, based on a upper confidence bound-based acquisition function constructed using the Neural Network kernel in \eqref{kernel_eq}. To start with, similar to \eqref{inp_rand}, we represent the feedback as a noisy version of $\tilde{U}^{\mathrm{PA}}$, and write
\begin{equation} \label{feedback_eq}
    \tilde{y}_t= \tilde{U}^{\mathrm{PA}}(x_t) + \tilde{\varepsilon}(x_t),
\end{equation}
where $\tilde{\varepsilon}(x_t)$ is assumed to be a zero-mean \textit{heteroscedastic} noise with the variance $\mathbb{V}ar(\tilde{\varepsilon}(x_t))= \sum_{i\in [m]} q_i(x_t)[(v-x_t)_i- \Tilde{U}^{\mathrm{PA}}(x_t)]^2 $, and the production function $q(x_t)$ is defined in \eqref{new_prod_fcn}. It is straightforward to follow that $\mathbb{V}ar(\tilde{\varepsilon}(x_t))$ is continuous in $x_t$ since the continuity of $\tilde{a}^\star(x_t,w)$ in $x_t$ (Theorem~\ref{main_lemma}) makes $q_i(x_t)$ to be continuous on its argument. Here, we first impose the following assumption on $\tilde{\varepsilon}(x_t)$ to restrict it within a well-defined function class. 
\begin{assumption} [Noise function class] \label{var_proxy_assumption}
    The noise variance $\mathbb{V}ar( \tilde{\varepsilon}(x_t))$ belongs to an RKHS induced by some kernel $\kappa^v$, i.e., $\mathbb{V}ar( \tilde{\varepsilon}(\cdot))\in \mathcal{H}_{\kappa^v}$ with its RKHS norm is upper bounded with $B_{v}>0$.
\end{assumption}
Such assumptions are quite common when dealing with \textit{heteroscedastic} noise \cite{kirschner2018information,makarova2021risk}. To learn $\mathbb{V}ar(\tilde{\varepsilon}(\cdot))$, we construct a repeated setting, where we collect $\ell>1$ samples for each $x_t$ and form the feedback collection $\{\tilde{y}_{t}^i\}_{i=1}^\ell$. Then, we evaluate the sample mean and variance of $\tilde{\varepsilon}(x_t)$ as
\begin{subequations} \label{empirical_y}
    \begin{align}
        \hat{y}_t &= \frac{1}{\ell}\sum_{i=1}^\ell \tilde{y}_{t}^i= \frac{1}{\ell}\sum_{i=1}^\ell(\tilde{U}^{\mathrm{PA}}(x_t)+ \tilde{\varepsilon}_i(x_t)), \\
        \hat{v}_t &= \frac{1}{\ell-1} \sum_{i=1}^\ell (\tilde{y}_{t}^i-\hat{y}_t)^2.
    \end{align}
\end{subequations}
While we have random feedback samples of $\tilde{y}_t$, we represent the randomness of sample variance using a zero-mean noise $\eta(x_t)$. Since $(v)_i,(x)_i \in [0,1]$, it is straightforward to check that the noise $\tilde{\varepsilon}(x_t)$ is uniformly bounded as $|\tilde{\varepsilon}(x_t)|<\sqrt{\chi_u}$, which leads to its variance function $\mathbb{V}ar( \tilde{\varepsilon}(x_t)) < \chi_u$ with $\chi_u>0$. And this allows us to further state $\hat{v}_t=  \mathbb{V}ar( \tilde{\varepsilon}(x_t)) + \eta(x_t)$. Moreover, the noise $\eta(x_t)$ adheres to the following standard assumption. 
\begin{assumption} [Noise process regularity]
\label{noise_assumption}
    The noise $\eta(x_t)$ is $\rho_{\eta}(x_t)$-sub-Gaussian with known $\rho^2_{\eta}(x_t)$. Moreover, the process~$\{\eta(x_t)\}_{t\geq 1}$ is generated independently in time.
\end{assumption}
The sub-Gaussianity assumption of $\eta(x_t)$ naturally follows from uniform boundedness of $\tilde{\varepsilon}(x_t)$. Following a brief introduction of GP models at the start of Section~\ref{sec_kernel_algo}, the posterior GP mean ($\mu_t(\cdot)$) and variance ($\sigma^2_t(\cdot)$) of $\Tilde{U}^{\mathrm{PA}}$ are respectively calculated based on the previous measurements $\hat{y}_{1:t}= [\hat{y}_1, \dots \hat{y}_t]^{\top}$
        \begin{equation} 
        \begin{aligned}
        \mu_t(x) &= k_t(x)^{\top} (K_t + \lambda \Sigma_t)^{-1} \hat{y}_{1:t}, \\
        \sigma^2_t(x) &= \frac{1}{\lambda}(k(x,x)-k_t(x)^{\top}(K_t +\lambda\Sigma_t)^{-1}k_t(x) ),
            \end{aligned} \label{post_mean_var}
\end{equation}
where $\Sigma_t:= \text{diag}( \mathbb{V}ar( \tilde{\varepsilon}(x_1)), \dots,  \mathbb{V}ar( \tilde{\varepsilon}(x_t)))$, $(K_t)_{i,j}= k(x_i,x_j)$ is an element of the covariance matrix $K_t$, $\lambda>0$ is a parameter trading off the magnitude of prior variance, and $k_t(x):= [k(x_1,x), \dots , k(x_t,x)]^{\top}$. Here, $k(x_i,x_j)$ refers to our proposed Neural Network kernel in \eqref{kernel_eq}. As $\Sigma_t$ depends on the unknown variance $ \mathbb{V}ar( \tilde{\varepsilon}(x_t))$, we construct a separate GP model to learn it. Since we could not decode any specialized geometry for $\tilde{\varepsilon}(\cdot)$, we consider a universal kernel, such as a Gaussian kernel, to define the GP. The corresponding posterior mean $\mu^v_t$ and variance $\sigma^v_t$ are calculated based on \eqref{post_mean_var} by replacing $\Sigma_t$ with $\Sigma^v_t:= \text{diag}[\rho_{\eta}^2(x_1), \dots, \rho_{\eta}^2(x_t)]$, $\hat{y}_{1:t}$ with $\hat{v}_{1:t}= [\hat{v}_1, \dots , \hat{v}_t]^{\top}$, and using Gaussian kernel $k^v(\cdot,\cdot)$. Then, we build the upper confidence bound $\text{ucb}_t(\cdot)$ as 
\begin{equation} \label{ucb_v}
    \text{ucb}_t (x):= \mu^v_{t-1} (x) + \beta^v_t \sigma^v_{t-1}(x),
\end{equation}
where the exploration-exploitation trade-off parameter $\beta^v_t= \sqrt{2\log \left(\frac{\det(\lambda \Sigma^v_t + K^v_t)^{0.5}}{\delta \det (\lambda \Sigma^v_t)^{0.5}}\right)} + \sqrt{\lambda}B_v $, taken from \cite[Lemma $7$]{kirschner2018information}. We can approximate $\Sigma_t$ with the upper confidence bound of $ \mathbb{V}ar( \tilde{\varepsilon}(x_t))$ as 
\begin{equation} \label{sigma_hat}
    \hat{\Sigma}_t = \text{diag}(\min \{\text{ucb}_t(x_1), \chi_u\}, \dots , \min \{\text{ucb}_t(x_t), \chi_u\})/\ell.
\end{equation}
Now we are in a position to define the upper confidence bound-based acquisition function for $\tilde{U}^{\mathrm{PA}}$ using the current posterior mean $\mu_{t-1}$ and variance $\sigma^2_{t-1}$ as 
\begin{equation} \label{aq_func}
    x_t = \arg\max_{x \in \mathcal{X}} \:\: \underbrace{\mu_{t-1}(x) + \beta_t \sigma_{t-1}(x)}_{\mathrm{Acquisition\; function}},
\end{equation}
where $x_t$ is the action of the algorithm to maximize the acquisition function, and the exploitation-exploration trade-off parameter
\begin{align} \label{beta_t}
    \beta_t= \sqrt{2\log \left(\frac{\det(\lambda \hat{\Sigma}_t + K_t)^{0.5}}{\delta \det (\lambda \hat{\Sigma}_t)^{0.5}}\right)}+ \sqrt{\lambda}B,
\end{align}
which follows from \cite[Lemma $7$]{kirschner2018information}. In \eqref{beta_t}, $\delta$ is the confidence interval, and $B$ is the upper-bound of the RKHS-norm of its kernel. Intuitively, $\beta_t$ determines how much we trust the accuracy of the current posterior mean ($\mu_{t-1}$), which is an approximation of $\Tilde{U}^{\mathrm{PA}}$ on the fly, by suitably adding the posterior variance ($\sigma_{t-1}$), since the posterior variance quantifies the uncertainty of $\mu_{t-1}$ being the true unknown $\Tilde{U}^{\mathrm{PA}}$. Note, the role of $\hat{\Sigma}_t$ in \eqref{beta_t} is to better estimate the unknown variance matrix (i.e., $\Sigma_t$), which essentially improves the exploitation-exploration trade-off. Needless to say, the posterior mean and variance of the acquisition function in \eqref{aq_func} is built with our proposed Neural Network kernel in \eqref{kernel_eq}. In Algorithm~\ref{algo}, we explain the operation of our proposed methodology. Note that the initial dataset $\mathcal{X}_0$ is created by sampling any fixed $x_0\in\mathcal{X}$ for $\ell$ times.  
\begin{algorithm}
\caption{\texttt{Heteroscedastic GP-UCB}}
\label{algo}
\SetKwInOut{Input}{Input}
\SetKwInOut{Output}{Output}

\textbf{Initialization:} $B,B_v$, $\lambda$, $\ell$, Kernel functions $k(\cdot,\cdot)$, $k^v(\cdot,\cdot)$, Confidence level $\delta$, Initial dataset $\mathcal{X}_0=\{(x^i_0,\tilde{y}^i_0)_{i \in [l]}\}$, and Prior $\mu_0=\mu^v_0=0$. 

\For{$t= 1,2, \dots $}{
    Compute $\hat{y}_t$ and $\hat{v}_t$ following \eqref{empirical_y} (use $\mathcal{X}_0$ if $t=1$). \\
    Update $\text{ucb}_t(\cdot)$ following \eqref{ucb_v}. \\
    Update $\hat{\Sigma}_t$ following \eqref{sigma_hat}. \\
    Update the posterior $\mu_t(\cdot)$ and $\sigma^2_t(\cdot)$ following \eqref{post_mean_var}. \\
    Select $x_t= \arg\max_{x \in \mathcal{X}} \:\: \mu_{t-1}(x) + \beta_t \sigma_{t-1}(x)$
    
    \For{$i \in \{1, \dots, \ell\}$}{   
           Collect $\{\tilde{y}_{t}^i\}$, where $\tilde{y}_{t}^i= \tilde{U}^{\mathrm{PA}}(x_t) + \tilde{\varepsilon}_i(x_t)$
    }
     
}
\end{algorithm}
\subsection{Regret bound}
The asymptotic performance of \texttt{Heteroscedastic GP-UCB} is based on the regret bound of \cite[Theorem $1$]{makarova2021risk} where the authors provide 
\begin{align}
    \mathbb{P}\left[ \Tilde{R}_T \leq \beta_T \ell \sqrt{\frac{2 \hat{\gamma}_T T}{\log(1+\frac{\ell}{\chi_u^2})}} \right] \geq 1-\delta,
\end{align}
where $\delta$ is the confidence level, $\beta_T$ is defined in \eqref{beta_t}, and $\hat{\gamma}_T$ denotes the maximum information gain \cite{srinivas2012information}. At time $T$, $\hat{\gamma}_T$ is defined using mutual information $I(y_{1:T},\Tilde{U}^{\mathrm{PA}}_{1:T})$ \cite[Equation $2.28$]{cover1999elements} between the feedback $y_{1:T}=[y_1, \dots y_T]^\top$ and $\Tilde{U}^{\mathrm{PA}}_{1:T}=[\Tilde{U}^{\mathrm{PA}}(x_1), \dots \Tilde{U}^{\mathrm{PA}}(x_T)]^\top$ as
\begin{align}
    \hat{\gamma}_T:= \; \max_{A \subset \mathcal{X},|A|=T} \quad I(y_{1:T},\Tilde{U}^{\mathrm{PA}}_{1:T}).
\end{align}
In other words, maximum information gain is essentially the maximum uncertainty reduction of the unknown function $\Tilde{U}^{\mathrm{PA}}$ for a set of $T$ sampled data (say, $A$) in $\mathcal{X}$. The main technical challenge in this setup is to find an upper bound for $\hat{\gamma}_T$. $\hat{\gamma}_T$ is kernel-specific, and there exist standard upper bounds for $\hat{\gamma}_T$ where the kernel $k$ is Gaussian, Matern, or Linear \cite{srinivas2012information}. However, to our best knowledge, no such upper bound exists for our proposed neural-network kernel in \eqref{kernel_eq}. Following \cite[Theorem $8$]{srinivas2012information}, the upper bound of $\hat{\gamma}_T$ is a function of the eigenvalue tail sum $B_k(T_*)$ of the respective kernel, where $B_k(T_*):= \sum_{s=T_*+1}^{\infty}\lambda_s$, and $\lambda_s$ is the eigenvalue corresponding to the integral operator of any kernel $k(x,y)$. For our proposed neural network kernel in \eqref{kernel_eq}, the following result provides an upper bound for $B_k(T_*)$.  
\begin{proposition}[Bound on eigenvalue tail sum] \label{eigen_val_lemma}
    Given the neural network kernel $k(x,y)$ in \eqref{kernel_eq}, its eigenvalues $\lambda_s$, corresponding to the integral operator, decay exponentially. Consequently, the eigenvalue tail sum $B_k(T_*)= \sum_{s=T_*+1}^{\infty} \lambda_s$ is bounded as
    \begin{align}
        B_k(T_*) \leq \mathcal{O}\big(\exp{{(-\beta T^{1/m}_*)}}\big)
    \end{align} 
    for some constant $\beta >0$.
\end{proposition}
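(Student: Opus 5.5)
The approach is to show that $k$ is the restriction to $\mathcal{X}=[0,1]^m$ of a function analytic in a complex neighbourhood of $\mathcal{X}\times\mathcal{X}$. Then we use the standard link between analyticity of a kernel and exponential decay of the Mercer eigenvalues of its integral operator (with respect to, say, the uniform measure on $\mathcal{X}$), in the form $\lambda_s\le C\exp(-c\,s^{1/m})$. This is the route used for the Gaussian kernel in \cite{srinivas2012information}.

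\textbf{Step 1 (analyticity).} Write $k(x,y)=\frac{2\sigma_\varsigma^2}{\pi}\arcsin(g(x,y))$ with
\[
g(x,y)=\frac{\sigma_v^2\langle x,y\rangle+\sigma_b^2}{\sqrt{(1+\sigma_v^2\|x\|^2+\sigma_b^2)(1+\sigma_v^2\|y\|^2+\sigma_b^2)}}.
\]
The radicands are at least $1+\sigma_b^2>0$ on $\mathcal{X}$, so the square root is analytic near $\mathcal{X}\times\mathcal{X}$. By Cauchy--Schwarz,
\[
|\sigma_v^2\langle x,y\rangle+\sigma_b^2|\le\sqrt{(\sigma_v^2\|x\|^2+\sigma_b^2)(\sigma_v^2\|y\|^2+\sigma_b^2)}.
\]
Hence $|g|\le \sup_{\mathcal{X}} \frac{\sigma_v^2 m+\sigma_b^2}{1+\sigma_v^2 m+\sigma_b^2}=:\rho<1$ uniformly on the compact set. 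The function $\arcsin$ is analytic on the open unit disk. Therefore $k$ extends holomorphically to a polydisc neighbourhood $\{(x,y)\in\mathbb{C}^{2m}: \mathrm{dist}((x,y),\mathcal{X}^2)<r\}$ for some $r>0$ depending on $\rho,\sigma_v,\sigma_b,m$, and it is bounded there.

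\textbf{Step 2 (approximation by low-rank kernels).} Cover $\mathcal{X}$ by finitely many boxes and expand $k(\cdot,y)$ in tensor Chebyshev (or Taylor) polynomials of total degree at most $N$ in $x$. For analytic functions in a Bernstein polyellipse, the uniform error is $\le C'\exp(-c'N)$, uniformly in $y$. This gives a kernel $k_N$ of rank $\le \binom{N+m}{m}\asymp N^m$ with $\sup|k-k_N|\le C'e^{-c'N}$. It is cleaner to symmetrize, e.g. expand in both variables jointly so that $k_N(x,y)=\sum_{|\alpha|,|\beta|\le N}c_{\alpha\beta}T_\alpha(x)T_\beta(y)$ with rank $\le \dim\mathcal{P}_N$.

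\textbf{Step 3 (eigenvalue bound).} The integral operator $T_k$ on $L^2(\mathcal{X},\mu)$ with $\mu$ a probability measure satisfies $\|T_k-T_{k_N}\|_{op}\le\sup|k-k_N|$. By Weyl's inequality (the Allakhverdiev approximation-number characterization), $\lambda_{s+1}(T_k)\le\|T_k-T_{k_N}\|$ whenever $s\ge\mathrm{rank}(k_N)$. Choosing $N\asymp s^{1/m}$ yields $\lambda_s\le C\exp(-\beta s^{1/m})$.

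\textbf{Step 4 (tail sum).} Compare the sum with an integral:
\[
B_k(T_*)=\sum_{s>T_*}\lambda_s\le C\int_{T_*}^\infty e^{-\beta u^{1/m}}du.
\]
Substituting $u=v^m$ gives $Cm\int_{T_*^{1/m}}^\infty v^{m-1}e^{-\beta v}dv=\mathcal{O}\big(T_*^{(m-1)/m}e^{-\beta T_*^{1/m}}\big)$. The polynomial prefactor is absorbed by shrinking $\beta$ slightly, which gives $\mathcal{O}(\exp(-\beta T_*^{1/m}))$.

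The main obstacle is Step 2: making the polynomial approximation rate explicit in $m$ dimensions with a uniform domain of analyticity. This needs the bound $|g|<1$ to persist for complex perturbations of size $r$. I will handle it by continuity of $g$ on a compact neighbourhood, after which the constants are not tracked explicitly.
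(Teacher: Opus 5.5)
Your proposal is correct and follows the same overall structure as the paper. You establish analyticity of $k$ through the Cauchy--Schwarz bound $|g|\le\rho<1$, which is the content of the paper's Lemma~\ref{analytic_lemma}. You then use exponential eigenvalue decay $\lambda_s\le Ce^{-\beta s^{1/m}}$, and finish with the integral comparison. That last step, after your substitution $u=v^m$, is exactly the paper's incomplete-gamma computation $\Gamma(m,\beta T_*^{1/m})$, followed by absorbing the factor $T_*^{(m-1)/m}$ into a slightly smaller $\beta$.

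The real difference is the middle step. The paper invokes the spectral theory of analytic kernels \cite{little1984eigenvalues,pietsch1987eigenvalues} as a black box to get $\lambda_s\le Ce^{-v s^{1/m}}$. Your Steps 2--3 prove this directly:
\begin{itemize}
\item A Chebyshev truncation of degree $N$ gives a kernel of rank at most $\binom{N+m}{m}\asymp N^m$, with sup-error $\mathcal{O}(N^{m-1}\varrho^{-N})$, where $\varrho>1$ is the Bernstein-polyellipse parameter.
\item The approximation-number identity $\lambda_{s+1}(T_k)=\inf_{\mathrm{rank}F\le s}\|T_k-F\|$ holds for the compact positive self-adjoint $T_k$.
\item Combining the two with $N\asymp s^{1/m}$ yields the bound.
\end{itemize}

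Your route gains three things over the citation:
\begin{itemize}
\item It shows where the exponent $s^{1/m}$ comes from, namely the dimension count of $\mathcal{P}_N$.
\item It makes explicit the holomorphic extension to a complex neighbourhood with a uniform bound. The decay argument actually needs this; the paper's real-analyticity lemma leaves it implicit, although it follows by compactness.
\item Because $\|T_k-T_{k_N}\|\le\sup|k-k_N|$ for any probability measure, your bound holds for every probability measure on $\mathcal{X}$, not just the uniform one used in the information-gain bound.
\end{itemize}
The paper's version is shorter precisely because it outsources this step.

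Two of your side remarks are unnecessary:
\begin{itemize}
\item Covering $\mathcal{X}$ by boxes is not needed, since a thin Bernstein polyellipse around the whole cube already fits inside your complex neighbourhood.
\item The symmetrized expansion is not needed either, because the approximation-number characterization does not require the rank-$s$ approximant to be symmetric.
\end{itemize}
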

The proof is given in Appendix~\ref{eigen_proof}. The proof of the Proposition~\ref{eigen_val_lemma} hinges on the analytic property of any kernel. The analytic property refers to a certain level of smoothness of any kernel. It turns out that our proposed kernel enjoys this property (Lemma~\ref{analytic_lemma}) which leads to its exponentially decaying eigenvalues. With the help of Proposition~\ref{eigen_val_lemma}, we manage to derive an explicit upper bound for $\hat{\gamma}_T$, which helps us establish the main regret bound of this study.
\begin{theorem}[Sub-linear regret] \label{main_theorem}
    Consider $\tilde{U}^{\mathrm{PA}} \in \mathcal{H}_k$ with bounded RKHS-norm (i.e., $\|\tilde{U}^{\mathrm{PA}}\|_{\mathcal{H}_k} \le B$) and sampling model in \eqref{feedback_eq} with unknown variance $\mathbb{V}ar( \tilde{\varepsilon}(x_t))$ that satisfies Assumptions~\ref{var_proxy_assumption} and~\ref{noise_assumption}. Let $\{x_t\}_{t=1}^{T}$ denote the set of actions chosen by Algorithm~\ref{algo}. Setting $\lambda=1$ in \eqref{beta_t}, the cumulative regret $\Tilde{R}_T$ defined in \eqref{reg_final} is bounded as 
    \begin{equation}
        \mathbb{P}\left[\Tilde{R}_T \leq \mathcal{O}\left(\sqrt{T}(\log T)^{m+1}\right)\right] \geq 1 - \delta
    \end{equation}
    for a chosen confidence level $\delta >0$. 
\end{theorem}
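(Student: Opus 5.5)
The plan is to start from the high-probability bound of \cite[Theorem $1$]{makarova2021risk} quoted above,
\[
\tilde R_T \le \beta_T\,\ell\sqrt{\frac{2\hat\gamma_T T}{\log(1+\ell/\chi_u^2)}}
\]
with probability at least $1-\delta$. Since $\ell$ and $\chi_u$ are constants independent of $T$, it suffices to show two things. First, $\hat\gamma_T=\mathcal{O}((\log T)^{m+1})$. Second, $\beta_T=\mathcal{O}(\sqrt{\hat\gamma_T})$, i.e. $\beta_T=\mathcal{O}((\log T)^{(m+1)/2})$. Multiplying these gives $\sqrt{T}\,\sqrt{\hat\gamma_T}\,\beta_T=\mathcal{O}(\sqrt T(\log T)^{m+1})$, which is the claim. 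The confidence events from the variance GP (the $\text{ucb}_t$ validity, \cite[Lemma $7$]{kirschner2018information}) and the main GP are combined by a union bound, rescaling $\delta$ by a constant.

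For the information gain, I would invoke \cite[Theorem $8$]{srinivas2012information}. It bounds $\hat\gamma_T$ by
\[
\max_{r}\Bigl(T_*\log(rn_T/\lambda)+C\,(1-r/T)\bigl(T^{\tau+1}B_k(T_*)+1\bigr)\log T\Bigr)+\mathcal{O}(T^{1-\tau/m}),
\]
up to the discretization term, and holds for every choice of $T_*$ and $n_T$. Proposition~\ref{eigen_val_lemma} gives $B_k(T_*)\le C'\exp(-\beta T_*^{1/m})$. I would choose $T_*=\lceil(\log(T^{\tau+1}n_T)/\beta)^m\rceil$, so that $T^{\tau+1}B_k(T_*)=\mathcal{O}(1)$. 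Then I would take $n_T$ and $\tau$ polynomial in $T$, following the squared-exponential case in \cite{srinivas2012information}, so that the discretization term is $\mathcal{O}(1)$ and $\log n_T=\mathcal{O}(\log T)$. This gives $T_*=\mathcal{O}((\log T)^m)$ and hence $\hat\gamma_T=\mathcal{O}((\log T)^m\cdot\log T)=\mathcal{O}((\log T)^{m+1})$. Throughout I use $\lambda=1$ and note that the heteroscedastic noise levels in $\hat\Sigma_t$ lie in $[\,\cdot\,,\chi_u/\ell]$. If they are also bounded below (the variance floor implicit in the cited theorem), the weighted information gain differs from the unweighted one only by constant factors.

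For $\beta_T$ with $\lambda=1$, the log-determinant term satisfies
\[
\log\det(\hat\Sigma_T+K_T)-\log\det\hat\Sigma_T=\log\det(I+\hat\Sigma_T^{-1/2}K_T\hat\Sigma_T^{-1/2}),
\]
which equals $2I(y_{1:T};\tilde U^{\mathrm{PA}}_{1:T})\le 2\hat\gamma_T$. So $\beta_T\le\sqrt{2\hat\gamma_T+2\log(1/\delta)}+B=\mathcal{O}((\log T)^{(m+1)/2})$.

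I expect the main obstacle to be applying \cite[Theorem $8$]{srinivas2012information}. That theorem is stated for a base measure on $\mathcal{X}$ and requires the eigenfunctions to be uniformly bounded; the arcsin kernel is not stationary, so this bound is not automatic. I would first check that the eigenfunction bound follows from the analyticity in Lemma~\ref{analytic_lemma}, or alternatively sidestep it using a tail-sum-based information-gain bound, such as the Mercer-truncation argument of Vakili et al. There I would take $D=T_*$ features, so that $\hat\gamma_T\lesssim D\log T+T\,B_k(D)$, which yields the same rate. A secondary subtlety is the lower bound on the heteroscedastic variances needed to compare the weighted and unweighted information gains; I would handle it by assuming a positive variance floor or clipping $\hat\Sigma_t$ from below.
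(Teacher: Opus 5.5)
Your proposal is correct and follows essentially the same route as the paper. Both start from \cite[Theorem~1]{makarova2021risk}, bound $\beta_T=\mathcal{O}(\sqrt{\gamma_T})$, and bound $\gamma_T$ via \cite[Theorem~8]{srinivas2012information} with $\tau=m$ and $T_*\asymp(\log T)^m$ taken from Proposition~\ref{eigen_val_lemma}, which gives $\gamma_T=\mathcal{O}((\log T)^{m+1})$ and $\tilde{R}_T=\mathcal{O}(\sqrt{T}\,\gamma_T)$.

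The two caveats you flag are genuine: uniformly bounded eigenfunctions for the non-stationary arcsin kernel, and a positive floor on the entries of $\hat\Sigma_t$ so that the log-determinant in $\beta_T$ is controlled by a constant-noise information gain. The paper's proof does not check either of them; it simply invokes the cited results. Your argument is therefore at least as complete as the paper's.
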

\begin{proof}
The broad idea of the proof is to derive a kernel-specific upper bound of the information gain for an already existing regret bound in \cite{makarova2021risk}. To bound the information gain, we follow key ideas from \cite{srinivas2012information} to bound the eigenvalue tail sum of our proposed kernel in \eqref{kernel_eq}, and Proposition~\ref{eigen_val_lemma} does this job. To prove Proposition~\ref{eigen_val_lemma}, we require the analytic property of our kernel, which is given by Lemma~\ref{analytic_lemma}. The rest of the proof uses a few known algebraic manipulations to simplify the regret bound. 

We first write from Theorem $1$ of \cite{makarova2021risk} for  the coefficient of risk $\alpha =0$
\begin{align}
    \mathbb{P}\left[ \Tilde{R}_T \leq \beta_T \ell \sqrt{\frac{2 \hat{\gamma}_T T}{\log(1+\frac{\ell}{\chi_u^2})}} \right] \geq 1-\delta,
\end{align}
where $\delta$ is the confidence level, and the maximum information gain \cite[Equation $42$]{makarova2021risk} is defined as
\begin{equation} \label{inf_gain}
    \hat{\gamma}_T = \max_{A \subset D, |A|=T} \:\: \sum_{t=1}^{T} 0.5 \log \left( 1 + \frac{\sigma^2_{t-1}(x_t | \text{diag}(\chi_u^2/\ell))}{\chi^2_u /\ell} \right).
\end{equation}
One could compare \eqref{inf_gain} with the information gain of homoscedastic noise ($\gamma_T$ in \cite[Lemma $5.3$]{srinivas2012information}) and state that $\hat{\gamma}_T$ is simply same as $\gamma_T$ with a constant noise variance $\sigma^2= (\chi_u^2/\ell)$. Therefore, making the  required modification and replacing the upper bound of $\beta_T$ from equation $(25)$ in \cite{makarova2021risk}
\begin{equation} \label{reg_1}
    \Tilde{R}_T \leq \frac{\ell\sqrt{T}}{\sqrt{\log(1+ \frac{\ell}{\chi_u^2})}} \left( \sqrt{4 \gamma_T\log(1/\delta) + 2 \gamma_T^2} + B\sqrt{2\lambda \gamma_T} \right).
\end{equation}
Now, the key part is to upper bound $\gamma_T$ for our proposed kernel $k(x,y)$ in \eqref{kernel_eq}. For this, we use the result in \cite[Theorem $8$]{srinivas2012information} as
\begin{align} \label{lambda_T}
    &\gamma_T \leq \frac{0.5}{1-e^{-1}} \max_{r \le T} \:\: \Big(T_* \log(rn_T/\sigma^2)+ 
     C_4\sigma^{-2}(1-r/T)(\log T)(T^{\tau +1}B_k(T_*)+1) \Big) + \mathcal{O}(T^{1 - {\tau \over m}})
\end{align}
for any $T_* \in \{1, \dots , n_T\}$ and $\tau>0$. Here $B_k(T_*)= \sum_{s>T_*} \lambda_s$ with $\{\lambda_s\}$ being the operator spectrum of the kernel $k(x,y)$ with respect to uniform distribution over $D$, $n_T= C_4T^{\tau}\log T$ with $C_4= 2\mathcal{V}(D)(2\tau+1)$, and $\mathcal{V}(D)= \int_{x \in D}dx$. 

Absorbing \eqref{lambda_T} inside `Big-O' notation, we write 
\begin{align*}
    \gamma_T &\leq \mathcal{O}\Big(\max_{r=1,\dots, T}\: (T_*\log(rn_T)) + (T-r)T^{\tau}B_k(T_*)\Big) + \mathcal{O}(T^{1- \tau/m}). 
\end{align*}
Now setting $\tau=m$ converts $\mathcal{O}(T^{1-m/m})= \mathcal{O}(1)$. As $T \rightarrow \infty$, we can further compress the upper bound and write 
\begin{align}
    \gamma_T  & \leq \mathcal{O}\Big(\max_{r=1,\dots, T}\: (T_*\log(rn_T)) + T^{m+1}B_k(T_*)\Big), \nonumber \\
    &= \mathcal{O}\Big(\max_{r=1,\dots, T}\: (T_*\log(rT^m)) + T^{m+1}B_k(T_*)\Big), \quad (\text{by setting $n_T=\mathcal{O}(T^m\log T)$}), \nonumber\\
    & = \mathcal{O}\Big(T_*\log (T^{m+1}) + T^{m+1}B_k(T_*)\Big), \quad \text{(max is attained at $r=T$)}, \nonumber \\
    &= \mathcal{O}\Big(T_*\log (T) + T^{m+1}B_k(T_*)\Big).
\end{align}
From Proposition~\ref{eigen_val_lemma}, we know that $B_k(T_*) \leq \mathcal{O}(e^{-\beta T^{1/m}_*})$. Replacing it in the above equation
\begin{align} \label{gama_ub}
    \gamma_T & \leq \mathcal{O}\Big(T_*\log T + T^{m+1}e^{-\beta T^{1/m}_*}\Big), \nonumber \\
    & = \mathcal{O}\Big((\log T)^m \log T\Big), \:\:(\text{by setting $T_*= \Big(\frac{m+1}{\beta}\log T\Big)^m$, }\text{the second term is $\mathcal{O}(1)$}) \nonumber\\
    &= \mathcal{O}\Big((\log T)^{m+1}\Big).
\end{align}
Now, simplifying \eqref{reg_1} in terms of $\gamma_T$ 
\begin{align}
    \Tilde{R}_T & \leq \mathcal{O}\left(\sqrt{T}(\sqrt{\gamma_T + \gamma_T^2} + \sqrt{\gamma_T})\right), \nonumber \\
    &= \mathcal{O}\left(\sqrt{T}(\gamma_T +\sqrt{\gamma_T}) \right) \nonumber \text{(as $\gamma_T^2$ is dominant over $\gamma_T$)},\\
    &= \mathcal{O}\left(\sqrt{T}(\gamma_T ) \right) \nonumber \text{(as $\gamma_T$ is dominant over $\sqrt{\gamma_T}$)},\\
    &= \mathcal{O}\Big(\sqrt{T}(\log T)^{m+1}\Big) \:\: (\text{replacing equation \eqref{gama_ub}}).
\end{align}
Therefore, we have $\mathbb{P}\left[\Tilde{R}_T \leq \mathcal{O}\left(\sqrt{T}(\log T)^{m+1}\right)\right] \geq 1 - \delta,$ where $\delta> 0$ is the user-defined confidence level. This concludes the proof. 
\end{proof}
\section{Application: V2G Incentive Design }
In a standard V2G ecosystem, an Aggregator coordinates a fleet of electric vehicle (EV) owners to trade energy with the Distribution System Operator (DSO). In particular, Aggregators provide ancillary services to local DSOs in response to time-varying tariffs ($\lambda^g$), while procuring them from EV users. One key conflict arises because discharging power to the grid accelerates battery degradation of the EVs. This degradation imposes a private, stochastic cost on the EV owner, driven by the unobservable decline of battery health. In addition, the EV users impose a subjective cost for ``range anxiety'' while participating in V2G. Because the Aggregator cannot directly observe these individual users' costs or forcefully command the EVs to discharge, they must rely on incentives. 
\subsection{Mapping V2G to principal-agent model } 
To model the interaction between the Aggregator and EV users, we first clarify the research goal to be addressed in this application. To this end, we ask the following research question. 
\begin{figure}
    \centering
    \includegraphics[width=0.7\linewidth]{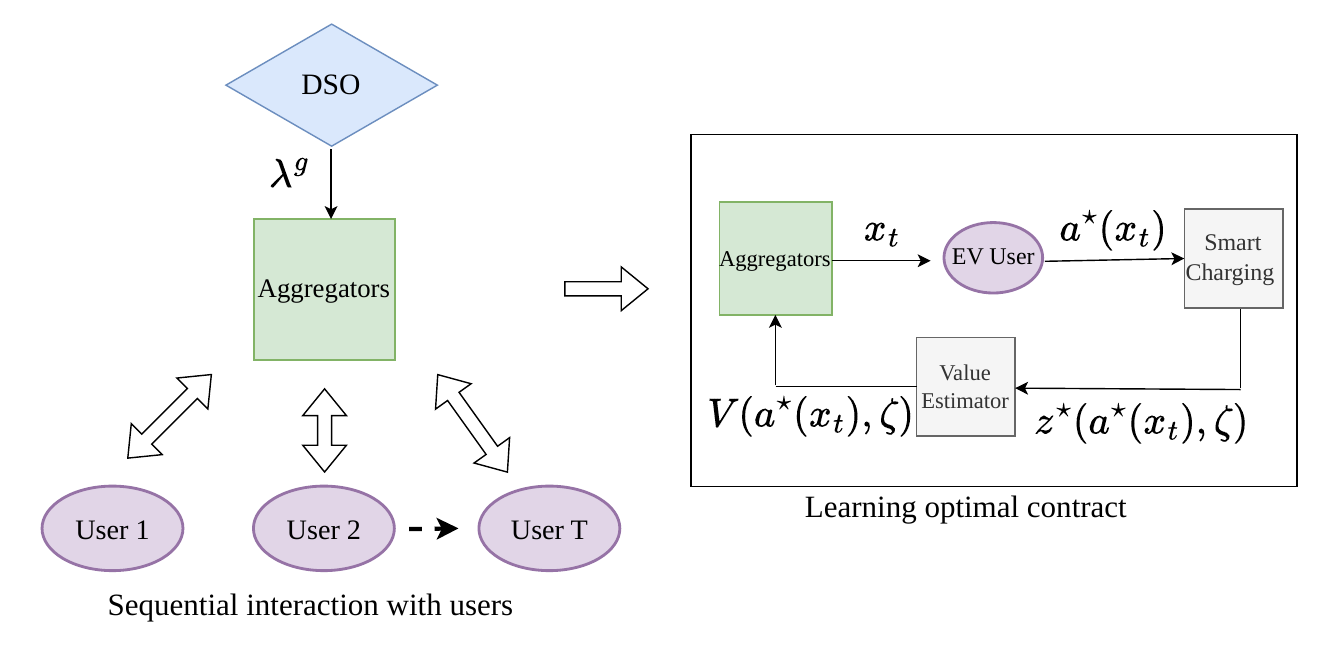}
    \caption{V2G incentive design schematic showing sequential interaction between aggregators and EV users.}
    \label{v2g_schematic}
\end{figure}
\begin{problem}
    How can an aggregator dynamically learn an adaptive incentive scheme (contract) that elicits optimal EV participation to maximize the Aggregator's revenue, despite the unobservable and stochastic nature of user costs?
\end{problem}
The solution approach is to map the V2G ecosystem into our dynamic principal-agent framework depicted in Figure~\ref{pa_fig}. In particular, we want to mathematically model the interaction between the Aggregator and EV user in the form of \eqref{agent_utility}, where the Aggregator is the principal and the EV user is the agent. As depicted in Figure~\ref{v2g_schematic}, we model the incentive design problem as a sequential interaction between the Aggregator and an EV user, where at time $t$, the aggregator offers a contract $x_t:=[x^\mathrm{l},x^\mathrm{h}] \in [0,1]^2$, specifying monetary incentives for low ($\mathrm{l}$) and high ($\mathrm{h}$) V2G contribution outcome set $\{\mathrm{l}, \mathrm{h}\}$. The EV user responds with a decision $a(x_t) \in \mathcal{A}:= \{0,1\}$, where $0$ denotes participation and $1$ denotes non-participation in the V2G program. Assume there exists a utility function $U^{\mathrm{A}}(x_t,a_t):=\sum_{i\in [2]} (x_t)_i\;p_i(a_t)-c(a_t) $ for the EV user, where $c:\mathcal{A}\rightarrow\mathbb{R}$ denotes the user's cost function, and $p:\mathcal\{0,1\}\rightarrow \Delta_2$ is the production function. The EV user optimizes $U^{\mathrm{A}}$ to retrieve the best response $a^\star(x_t)$. Based on $a^\star(x_t)$, an outcome $\{l,h\}\ni i_t \sim p(a^\star(x_t))$ is realized. However, the process from the EV user's optimal action $(a^\star(x_t))$ to the realization of an outcome $(i_t)$ includes several steps within the context of V2G operation. In particular, the non-trivial part is to characterize a production function $p(\cdot)$ that links EV users' best response $a^\star(x_t)$ to a stochastic outcome ($i_t$) for the Aggregator.
\subsection{Characterization of $p(\cdot)$} To formalize the characterization, we split the V2G operation into two parts: $(1)$ \textit{smart charging}, and $(2)$ \textit{value estimator}. The \textit{smart charging} problem, parameterized in $a^\star(x_t)$, provides a stochastic charging profile, which is passed through a deterministic \textit{value estimator} to obtain the outcome and value for the aggregator. 

\textit{Smart charging:} Consider the decision vector $z=[P^{\text{c}},P^{\text{d}},\delta,E]^\top$ where $P^{\text{c}}\in \mathbb{R}^N$ is the charging power, $P^{\text{d}} \in \mathbb{R}^N$ is the discharging power for $N$ intervals, $E \in \mathbb{R}^N$ is the EV battery energy profile, and $\delta_i \in \{0,1\}, \forall i \in [N]$ refers the charging ($\delta_i=1$) or the discharging ($\delta_i=0$) mode of the EV. Given the user action $a^\star(x_t)$, the smart charging problem is compactly given as 
\begin{equation}\label{short_form}
    \max_{z} J(z,\zeta) \quad \text{s.t.} \quad z \in \mathcal{Z}(a^\star(x_t),\zeta),
\end{equation}
where $J$ encodes net charging benefit for the EV user, $\mathcal{Z}$ encodes the operational constraints, and the uncertain vector is $\zeta=[N,\eta^{\text{c}},\eta^{\text{d}},E_{\text{initial}},E_{\text{desired}}]$. A detailed formulation of smart charging is given in Appendix~\ref{sc_detailed}, along with an explanation of each parameter in $\zeta$. As the uncertain vector $\zeta$ is chosen by the user, we impose the following assumptions. 

\begin{assumption}[Smart charging setting] \label{ass:smart-charging}
Consider problem~\eqref{short_form} where the uncertain vector $\zeta$ is defined on the probability space $(\Omega, \mathcal{F}, \mathbb{P})$ with support $\Xi$. 
\begin{enumerate}
\renewcommand{\labelenumi}{(\roman{enumi})}
 \item {\bf Measurability:} For every $a\in\mathcal{A}$, the correspondence $\zeta\mapsto\mathcal{Z}(a,\zeta)$ has a measurable graph, while the map $(z,\zeta)\mapsto J(z,\zeta)$ is Borel measurable in $\zeta$. 

 \item {\bf Feasibility:} We assume that the feasibility set $\mathcal{Z}(a,\zeta)$ in \eqref{short_form} is non-empty and compact for all $\zeta \in \Xi$ and $a\in \mathcal{A}$. 

\end{enumerate}
\end{assumption}
Assumption~\ref{ass:smart-charging} helps construct a well-defined production function $p(\cdot)$, as we will show it later. The smart charging profile $\{\hat{P}^{\text{c}}(a^\star(x_t),\zeta),\hat{P}^{\text{d}}(a^\star(x_t),\zeta)\}$ is used in the next step for the value estimator (see Figure~\ref{v2g_schematic}) where $\hat{P}^{\text{c}}(a^\star(x_t),\zeta)$ and $\hat{P}^{\text{d}}(a^\star(x_t),\zeta)$ are part of the solution of \eqref{short_form} for a realization of the uncertain parameter $\zeta$. 

\textit{Value estimator:} To formalize the value estimator, we first define the net profit of the Aggregator (after transactions with DSO) stands as $J^P(a^\star(x_t),\zeta):= \Big[J(z^\star(a^\star(x_t)),\zeta) + (\hat{P}^d(a^\star(x_t),\zeta)-\hat{P}^c(a^\star(x_t),\zeta))^{\top}\lambda^g \Delta t \Big], $ where $z^\star(a^\star(x_t))$ is the solution of~\eqref{short_form}. Then we define the value estimator function $V: \mathcal{A}\times\Xi \rightarrow \{v^l,v^h\}$ such that, $v^\text{h} \geq v^\text{l}$, and 
\begin{equation} \label{val_est}
    V(a^\star(x_t),\zeta)= \begin{cases}
        & v^\mathrm{h} \quad \text{if,} \:J^P_t (a^\star(x_t),\zeta) \geq J', \\
        &v^\mathrm{l} \quad \text{otherwise,}
    \end{cases}
\end{equation}
where the threshold profit $J'$ and the values $v=[v^\mathrm{l},v^\mathrm{h}]^\top$ are chosen by the aggregator. Since $\zeta$ is a well-defined random variable, we assume that $v^{\mathrm{h}}$ and $v^\mathrm{l}$ are realized with probabilities $p^\mathrm{h}$ and $p^\mathrm{l}$, respectively. Essentially, this translates to defining the production function $p(a):= [p^\mathrm{l},p^\mathrm{h}]^\top$. To validate the mathematical correctness of $p(a)$, we need to show that it is possible to define $p^\mathrm{h}= \mathbb{P}_\zeta[J^P_t (a,\zeta) \geq J']$ and $p^\mathrm{l}=1-p^\mathrm{h}$ for any $a \in \mathcal{A}$. Thus, we provide the following Proposition. 
\begin{proposition} [Equivalence to principal-agent] \label{prop_1}
    Under Assumption~\ref{ass:smart-charging}, the proposed V2G incentive scheme constitutes a Principal-Agent problem wherein the smart charging formulation in \eqref{short_form} and the value estimator in \eqref{val_est} induce a well-defined production function $p:\mathcal{A}\rightarrow\Delta_2$. 
\end{proposition}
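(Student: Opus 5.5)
The plan is to show that, for each fixed $a\in\mathcal{A}$, the map $\zeta\mapsto J^P(a,\zeta)$ is a random variable, i.e., Borel measurable on $(\Omega,\mathcal{F},\mathbb{P})$. The set $\{\zeta : J^P(a,\zeta)\ge J'\}$ is then an event. This lets us define $p^{\mathrm{h}}(a)=\mathbb{P}_\zeta[J^P(a,\zeta)\ge J']\in[0,1]$ and $p^{\mathrm{l}}(a)=1-p^{\mathrm{h}}(a)$, so $p(a)\in\Delta_2$. Since $\mathcal{A}=\{0,1\}$ is finite, measurability in $a$ is trivial. The principal-agent structure \eqref{agent_utility} then follows directly: given $p$, the agent utility $U^{\mathrm{A}}(x,a)=\sum_i x_i p_i(a)-c(a)$ and the principal utility $U^{\mathrm{P}}(x,a)=\sum_i (v-x)_i p_i(a)$ are well defined. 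The realized outcome $i_t\sim p(a^\star(x_t))$ reproduces the feedback model \eqref{reward_eq}.

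The substantive step is measurability of the optimal value and of a selection of optimizers of \eqref{short_form}. First, fix $a$. By Assumption~\ref{ass:smart-charging}(ii), $\mathcal{Z}(a,\zeta)$ is nonempty and compact, so a maximizer exists whenever $J(\cdot,\zeta)$ is upper semicontinuous. In the concrete smart-charging model of Appendix~\ref{sc_detailed}, $J$ is affine in $z$, and the binary $\delta$ makes $\mathcal{Z}$ a finite union of polytopes, so continuity in $z$ holds. I would state this as the Carathéodory structure we need: $J$ is continuous in $z$ and measurable in $\zeta$, which makes it a normal integrand. With a measurable graph (Assumption~\ref{ass:smart-charging}(i)) and compact values, I would invoke the measurable maximum theorem (Aliprantis--Border, Theorem 18.19) or the Kuratowski--Ryll-Nardzewski selection theorem. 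These give that the value function $\zeta\mapsto\max_{z\in\mathcal{Z}(a,\zeta)}J(z,\zeta)$ is measurable and that the argmax correspondence admits a measurable selector $z^\star(a,\zeta)$. A measurable-graph correspondence with closed values is weakly measurable once the underlying space is complete, or after passing to the universal completion; I would note that completing $\mathcal{F}$ suffices if needed.

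Next, the selector gives measurable components $\hat P^{\mathrm c}(a,\zeta)$ and $\hat P^{\mathrm d}(a,\zeta)$. Hence $J^P(a,\zeta)=J(z^\star,\zeta)+(\hat P^{\mathrm d}-\hat P^{\mathrm c})^\top\lambda^g\Delta t$ is a sum and composition of measurable maps, with $\lambda^g$ and $\Delta t$ deterministic. Composing $J$ with the selector needs joint measurability of $J$, which follows from the Carathéodory property. The threshold map $V$ in \eqref{val_est} is the indicator of a closed half-line composed with $J^P$, so it is measurable, and $p^{\mathrm h}(a)=\mathbb{E}[\mathbf 1\{J^P(a,\zeta)\ge J'\}]$.

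The main obstacle is the selection step. When the optimizer is not unique, $J^P$ depends on which optimizer is chosen, because the DSO transaction term is not part of $J$. I would handle this by fixing a measurable tie-breaking rule, for instance the lexicographic-minimum measurable selector, as part of the model. The resulting $p$ is then well defined relative to that rule, which is what the proposition asserts.
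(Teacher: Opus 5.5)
Your proposal is correct. It shares the paper's skeleton: reduce everything to measurability of $\zeta\mapsto J^P(a,\zeta)$, then read off $p^{\mathrm h}$ and $p^{\mathrm l}=1-p^{\mathrm h}$. The key lemma is different, though.

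\textbf{What the paper does.} It first invokes Berge's maximum theorem to claim that the optimal value is \emph{continuous} in $\zeta$. It then simply asserts that the optimizer and the optimal value are Borel measurable. Berge's theorem would require $\zeta\mapsto\mathcal{Z}(a,\zeta)$ to be upper and lower hemicontinuous and $J$ to be jointly continuous. Assumption~\ref{ass:smart-charging} provides neither, and continuity in $\zeta$ is awkward anyway when $N$, which fixes the dimension of $z$, is itself random. Continuity is also more than the statement needs.

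\textbf{What your route buys.} The measurable maximum theorem plus a measurable selection uses exactly the available hypotheses: measurable graph, nonempty compact values, and $J$ continuous in $z$ and measurable in $\zeta$. It also supplies two things the paper leaves implicit. First, it gives a measurable choice of $z^\star$, so that $\hat P^{\mathrm c}$ and $\hat P^{\mathrm d}$ are genuinely random variables. Second, it confronts the fact that $J^P$ depends on which maximizer is selected, because the DSO term is not part of $J$. The paper never mentions this dependence. Your remark that a measurable graph yields weak measurability only after completing $\mathcal{F}$ is likewise absent from the paper.

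\textbf{A small slip.} $J$ in \eqref{obj_sc} is not affine in $z$. It is concave quadratic because of the $-\alpha(\lVert P^{\mathrm c}\rVert_2^2+\lVert P^{\mathrm d}\rVert_2^2)$ term. This is harmless, since you only use continuity in $z$. In fact, strict concavity in $(P^{\mathrm c},P^{\mathrm d})$ for each fixed mode pattern $\delta$ shows that ties in $(\hat P^{\mathrm c},\hat P^{\mathrm d})$ can arise only between different mode patterns. Your tie-breaking rule therefore matters only in those degenerate cases.
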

The proof is given in Appendix~\ref{proof_proposition}. Finally, the Aggregator aims to find the optimal contract $x^\star= \arg\max_{x\in[0,1]^2}\{U^{\mathrm{PA}}(x)= \sum_{i\in[2]}(v-x)_ip_i(a^\star(x))\}$ using the reward values (i.e., $y_t= (v-x_t)_i,\;i\sim p(a^\star(x_t))$) from the interaction with sequentially arriving EV users. To apply our proposed algorithm \texttt{Heteroscedastic GP-UCB}, we consider the stochastic modification in the EV user's utility (i.e., from $U^{\mathrm{A}}$ to $\tilde{U}^{\mathrm{A}}$), in addition to the required Assumptions for the algorithm.
\section{Numerical Experiments}
\label{sec:experiments}
We present two numerical experiments evaluating the performance of our algorithm against state-of-the-art benchmarks. First, we simulate a complex variant of the `High-low' example discussed in Section~\ref{sub_sec_discont.}, where the agent's action space is expanded to $n=3$. Second, we validate the proposed V2G incentive scheme through a realistic case study. We benchmark our performance against \texttt{Agnostic Zooming} and \texttt{Discover and Cover} algorithms.
\begin{table}
\centering
\caption{Performance comparison of different methods.
Effort levels correspond to $a_1$ (low), $a_2$ (medium), and $a_3$ (high).}
\label{high_low_table}

{\setlength{\tabcolsep}{2pt} 
\begin{tabular}{lccc}
\hline
\textbf{Agent} & Low & Med. & High \\
\hline
Cost $c(a_i)$
& \makecell{$\mathcal{N}$$(0,\,0.05)$}
& \makecell{$\mathcal{N}$$(0.1,\,0.05)$}
& \makecell{$\mathcal{N}$$(0.2,\,0.05)$} \\

Production $p(a_i)$
& \makecell{$[0.95,$$0.05]$}
& \makecell{$[0.2,$$0.8]$}
& \makecell{$[0.05,$$0.95]$} \\
\hline
\textbf{Principal} & \multicolumn{3}{c}{} \\
\hline
Outcome $i$ & Low ($i=0$) & \multicolumn{2}{c}{High ($i=1$)} \\
Value $v$ & $v_0=0.1$ & \multicolumn{2}{c}{$v_1=0.8$} \\
\hline
\end{tabular}
}
\end{table}
\subsection{Performance comparison in `High-low' example}
We retain the contract dimension $m=2$ while increasing the number of available actions to $n=3$. The detailed parameters of the feedback environment are provided in Table~\ref{high_low_table}. We compare the algorithms over a horizon of $T=1000$ rounds, averaging over $5$ independent episodes per algorithm. Since the exact optimal contract is intractable to compute \textit{a priori} in complex physical environments, we follow the convention established by \cite{ho2014adaptive} and evaluate empirical performance using the \textit{average utility reward}, defined as $U_T = \frac{1}{T}\sum_{t=1}^T \Tilde{U}^{\mathrm{PA}}(x_t)$. We explicitly note that this metric must be interpreted differently from cumulative regret. While cumulative regret ideally flattens to indicate sub-linear growth, a successful average utility curve converges upward to a positive horizontal asymptote, which represents the maximum achievable per-round yield. Consequently, an algorithm whose average utility converges to a higher, stable plateau demonstrates superior identification and exploitation of the optimal contract. 

As illustrated in Figure~\ref{regret_fig}, our proposed algorithm outperforms the benchmarks. Additionally, we conduct an ablation study to highlight the impact of heteroscedastic noise modeling and kernel selection in Appendix~\ref{ablation_appendix}. Overall, the observed performance gap validates the efficacy of our two key contributions: (1) optimization within a continuous domain, and (2) utility geometry-aware kernel choice paired with heteroscedastic noise modeling.
\begin{figure}[t]
    \centering

    \begin{subfigure}[t]{0.48\columnwidth}
        \centering
        \includegraphics[width=\linewidth]{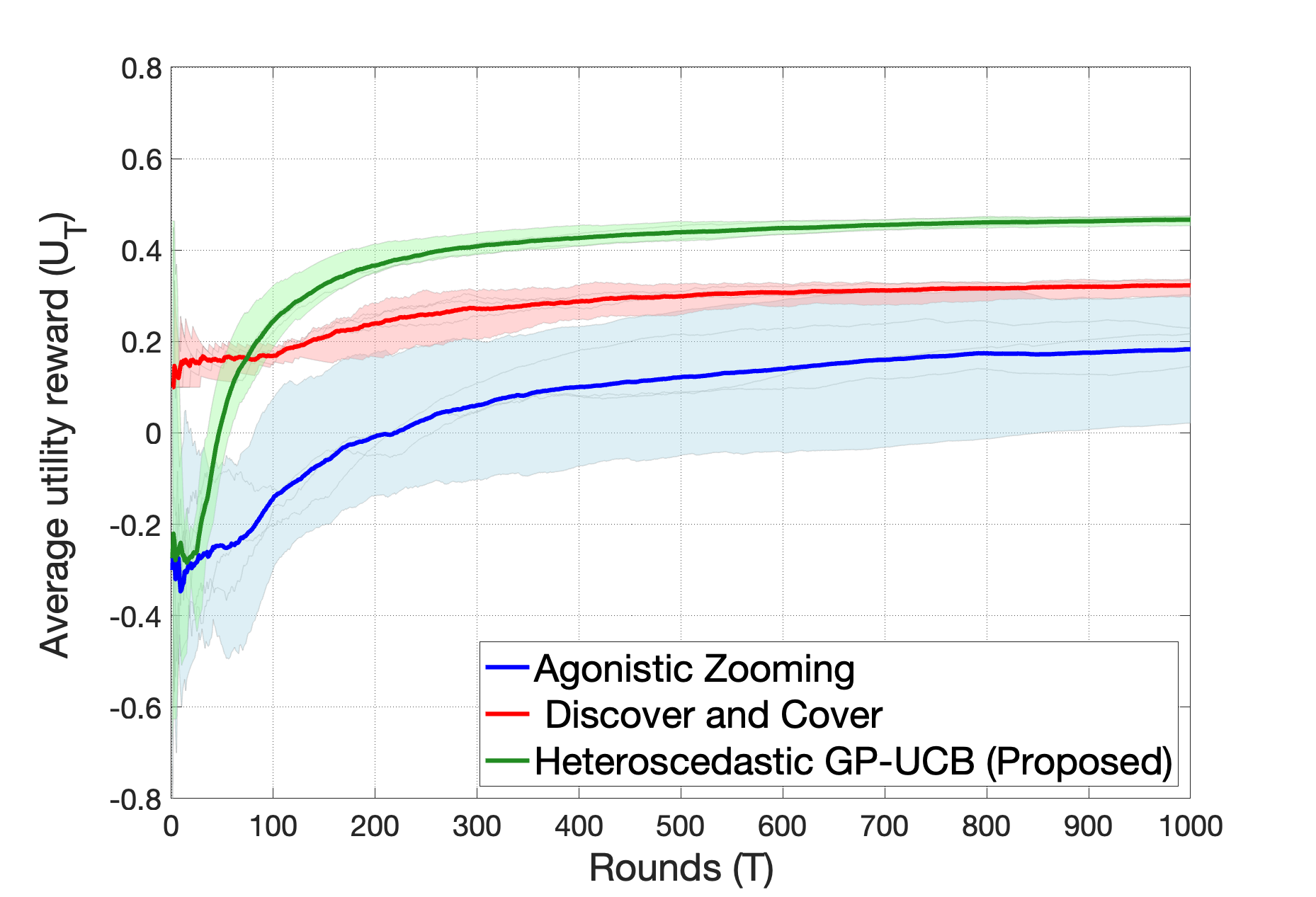}
        \caption{}
        \label{regret_fig}
    \end{subfigure}
    \hfill
    \begin{subfigure}[t]{0.48\columnwidth}
        \centering
        \includegraphics[width=\linewidth]{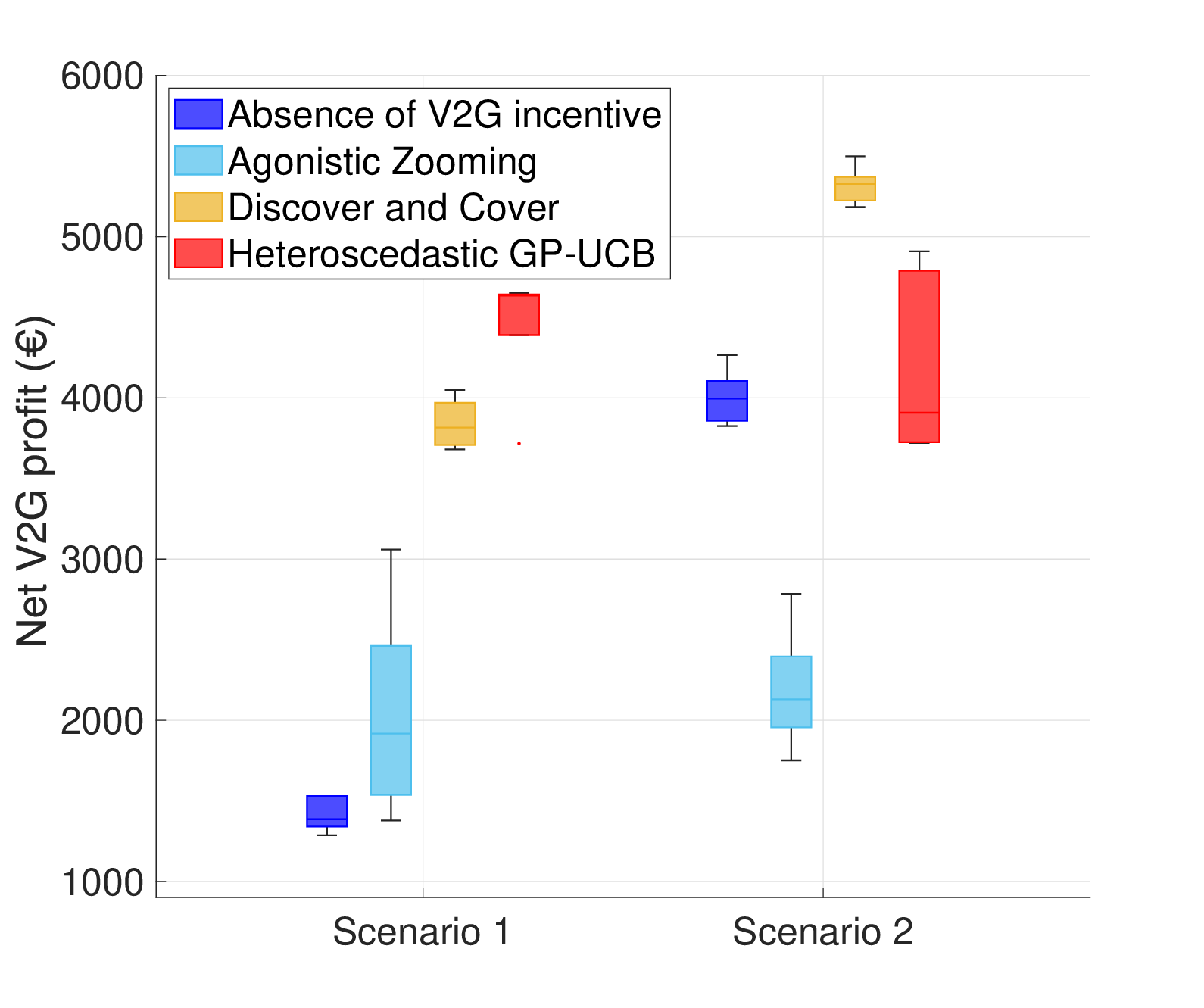}
        \caption{}
        \label{v2g_profit_fig}
    \end{subfigure}

    \caption{
    Performance comparison on the synthetic benchmark and V2G profit analysis:
    (a) Performance comparison of \texttt{Heteroscedastic GP-UCB} on the synthetic benchmark. The y-axis displays the \textbf{Average Utility Reward} ($U_T$), not cumulative regret. Therefore, the convergence of our proposed algorithm (green line) to a high, positive horizontal asymptote ($\approx 0.45$) indicates successful exploitation of the optimal contract, outperforming the baselines, which fail to converge to this optimal yield.
    (b) Comparison of the net V2G profit ($J_{\text{net}}$) for two scenarios, highlighting the impact of V2G incentive schemes under the benchmarked algorithms.
    }
    \label{fig:combined_results}
\end{figure}
\subsection{Results on V2G incentive design}
We evaluate the economic impact of our proposed incentive scheme on the Aggregator's net profit. To evaluate the aggregator's net profit from V2G participation of consecutive $T$ users, we define net V2G profit as $J_{\mathrm{net}}:= \vartheta\sum_{t=1}^T \tilde{U}^{\mathrm{PA}}(x_t),$ where $\vartheta$ is the re-normalization constant for conversion to Euro (\euro) currency. Specifically, we compare the net profit under two conditions: (i) a baseline scenario without an incentive mechanism, and (ii) a scenario where the V2G incentive mechanism is managed by the benchmarked algorithms. 
\textit{Experimental Setup:}
We construct a realistic simulation setup for this problem; detailed implementation specifics are provided in Appendix~\ref{implement_details}. We simulate two distinct user types, $\theta_1$ and $\theta_2$. The $\theta_1$ user incurs a low internal cost for V2G participation and is willing to participate without incentives. Conversely, the $\theta_2$ user is averse to participation due to high internal costs. We investigate two scenarios:
\textbf{Scenario 1 (High Aversion)} We sample $\theta_2$ with probability $0.95$. This dominance of reluctance creates an ideal testing ground for the impact of incentive design. \textbf{Scenario 2 (Mixed Population):} We uniformly sample $\theta_1$ and $\theta_2$. This presents a challenging setting where generating profit exceeding the baseline (no-incentive) case is difficult. We conduct the interaction with $T=1000$ sequential users, and gather results over $5$ independent episodes.

\textit{Discussion of Results:}
Figure~\ref{v2g_profit_fig} illustrates box plots of the key outcomes. In \textbf{Scenario 1}, we observe the substantial impact of the incentive design, with the baseline benchmark \texttt{Agnostic Zooming} achieving a minimum $\mathbf{43 \%}$ increase in $J_{\text{net}}$. Notably, our algorithm \texttt{Heteroscedastic GP-UCB} outperforms all benchmarks by a significant margin. Even in the challenging \textbf{Scenario 2}, our algorithm yields a $\mathbf{5 \%}$ higher profit compared to the absence of any V2G incentive scheme. This case study demonstrates that employing specialized learning-based algorithms can unlock diverse, economically viable business models for V2G technology. 
\section{Limitations and Future Work}
Numerical results demonstrate that our algorithm outperforms existing benchmarks. Furthermore, the V2G application confirms its robustness in complex, real-world settings. Nevertheless, we acknowledge the following limitations that suggest clear directions for future research.

\textit{Computational Scalability.} The kernel matrix inversion in \eqref{post_mean_var} incurs a computational complexity of $\mathcal{O}(T^3)$. While feasible for our experimental horizon ($T=1000$), this limits long-term deployment. Future work could integrate sparse GP approximations or variational inference to reduce computational complexity, enhancing real-time applications. 

\textit{Finite Agent Action Space.} While we eliminate exponential dependence on action count $n$, our second contribution still relies on a finite action set (Assumption~\ref{first_assumption}). Extending this framework to continuous agent action spaces would further theoretically improve upon existing methods. 

\appendix
\section{Technical Proofs}
\subsection{Auxiliary Lemma } \label{app_analytic_lemma_proof}
We first present a preparatory lemma that will be used to establish the main results of this study.
\begin{lemma}[Analytic kernel]\label{analytic_lemma}
    Let $D = [0,1]^m$ be the compact domain in $\mathbb{R}^m$. The kernel in \eqref{kernel_eq} is an analytic function on the domain $D \times D$.
\end{lemma}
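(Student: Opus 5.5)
The kernel in \eqref{kernel_eq} is a composition of elementary functions, and I will show that each inner map is real-analytic and lands inside the region where the outer map is analytic. Since compositions, products and quotients (with nonvanishing denominator) of real-analytic functions are real-analytic, this gives the claim. Concretely, I write
\[
k(x,y)=\tfrac{2\sigma_\varsigma^2}{\pi}\arcsin\big(g(x,y)\big),\qquad g(x,y):=\frac{N(x,y)}{\sqrt{M(x)M(y)}},
\]
with $N(x,y)=\sigma_v^2\langle x,y\rangle+\sigma_b^2$ and $M(x)=1+\sigma_v^2\langle x,x\rangle+\sigma_b^2$. Both $N$ and $M$ are polynomials in $(x,y)\in\mathbb{R}^{2m}$, hence analytic on all of $\mathbb{R}^{2m}$. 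Moreover $M(x)\ge 1+\sigma_b^2>0$ everywhere. Since $t\mapsto t^{-1/2}$ is analytic on $(0,\infty)$, the factor $(M(x)M(y))^{-1/2}$ is analytic, and therefore $g$ is analytic on a neighborhood of $D\times D$. In fact it is analytic on all of $\mathbb{R}^{2m}$.

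The main step is to keep $g$ strictly inside $(-1,1)$, because $\arcsin$ is analytic only on the open interval $(-1,1)$ and fails to be analytic at $\pm1$. I plan a Cauchy--Schwarz argument in the augmented vectors $\tilde x=(\sigma_v x,\sigma_b)\in\mathbb{R}^{m+1}$, for which $N=\langle\tilde x,\tilde y\rangle$. Cauchy--Schwarz gives
\[
|N(x,y)|\le\|\tilde x\|\,\|\tilde y\|=\sqrt{(M(x)-1)(M(y)-1)}.
\]
Since $0\le M-1<M$, this yields $|g(x,y)|\le\sqrt{\tfrac{(M(x)-1)(M(y)-1)}{M(x)M(y)}}<1$ for every $(x,y)$. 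On the compact set $D\times D$ the continuous function $|g|$ then attains a maximum $\bar g<1$. By continuity, $|g|<(1+\bar g)/2<1$ on an open neighborhood of $D\times D$.

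With this bound in hand, the power series $\arcsin(u)=\sum_{j\ge0}\frac{(2j)!}{4^j(j!)^2(2j+1)}u^{2j+1}$ converges on $|u|<1$, so $\arcsin$ is analytic there. Composing it with the analytic $g$, whose values stay in a compact subinterval of $(-1,1)$, gives a function analytic on a neighborhood of $D\times D$. Multiplying by the constant $2\sigma_\varsigma^2/\pi$ preserves this, so $k$ is analytic on $D\times D$.

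For later use in Proposition~\ref{eigen_val_lemma}, I would also note that the same bounds extend to a complex neighborhood: $M$ stays away from $0$ and $|g|$ stays below $1$ under small complex perturbations of $(x,y)$. Hence $k$ extends holomorphically to a complex neighborhood of $D\times D$, which is the form of analyticity usually needed for exponential eigenvalue decay.
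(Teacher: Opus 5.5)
Your proof is correct and takes the same route as the paper. Both decompose the kernel into a polynomial, a square root, a quotient and $\arcsin$, use Cauchy--Schwarz to keep the argument of $\arcsin$ strictly inside $(-1,1)$, and use compactness to get a uniform bound $h<1$. Your augmented vectors $\tilde x=(\sigma_v x,\sigma_b)$ spell out why the paper's asserted inequality $a^2\le bc$ holds, and your complex-neighborhood remark supplies the holomorphic extension that the cited exponential-decay results actually require.
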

\begin{proof}
    We observe that the kernel $k(x,y)$ is a composition of some functions. The broad idea of our proof is that we will show each composition is an analytic function in the domain $D \times D$. 
\begin{enumerate}
\renewcommand{\labelenumi}{(\roman{enumi})}
    \item The dot-product kernel $k_{\text{dot}}(x,y)=\sigma_v\langle x,y\rangle+\sigma_b$ is a polynomial in $x$ and $y$. Therefore, it is analytic in the domain. 
    \item Say $g(x,y)= (1 + k_{\text{dot}}(x,x))(1+k_{\text{dot}}(y,y))$. We find $k_{\text{dot}}(x,x)= \sigma_b^2 + \sigma_v^2\langle x,x\rangle > \sigma_b^2 >0$ that implies $(1 + k_{\text{dot}}(x,x)) > 1$. Therefore, $g(x,y) >1$ is an analytic function as it is the product of two analytic functions, and $\sqrt{g}$ is also an analytic function for all $g(x,y)>0$. 
    \item Representing $z(x,y)= \frac{k_{\text{dot}}(x, y)}{\sqrt{(1 + k_{\text{dot}}(x, x)) \cdot (1 + k_{\text{dot}}(y, y))}}$, it is an analytic function as the ratio of two analytic functions ($k_{\text{dot}}$ and $\sqrt{g}$) remain analytic as long as the denominator is non-zero. 
    \item Finally, it is well known that $\arcsin (z)$ is analytic if $z \in (-1,1)$. Therefore, we just need to check the range of $z(x,y)$. 
\end{enumerate}
Assume $a= k_{\text{dot}}(x,y), b=k_{\text{dot}}(x,x)$ and $c=k_{\text{dot}}(y,y)$. One could show using Cauchy-Schwarz inequality, $a^2 \leq bc$, furthermore it is easy to see that $\frac{a^2}{(1+b)(1+c)} <1$. Hence $|z(x,y)| < 1$ implies $z \in [-h,h]$ with $h <1$. This completes the proof. 
\end{proof}
\subsection{Proof of Proposition~\ref{eigen_val_lemma}}
\label{eigen_proof}

    It is known from the spectral theory of integral operators that analytic kernels have eigenvalues $\lambda_s$ which decay exponentially \cite{little1984eigenvalues,pietsch1987eigenvalues}. Using Lemma~\ref{analytic_lemma}, we can confirm the same for the eigenvalues of our proposed kernel $k(x,y)$. 

Following the above argument, we bound the tail sum $B_k(T_*)$ as
\begin{align*}
    B_k(T_*) \leq \sum_{s=T_*+1}^{\infty} Ce^{-vs^{1/m}} \leq C \int_{T_*}^{\infty} e^{-vs^{1/m}}ds
\end{align*}
for some constants $C,$ and $v$. It can be shown with some calculation that the above integral transforms to
\begin{align*}
     B_k(T_*) \leq C' \Gamma (m,vT_*^{1/m}),
\end{align*}
where $\Gamma$ is the upper incomplete gamma function and $C'$ is some constant. Using the formulae $\Gamma(m,q)= (m-1)! e^{-q}\sum_{k=0}^{m-1} (q^k /k!)$ \cite{gradshteyn2014table}, we can show for larger $q= vT_*^{1/m}$, our bound behaves as
\begin{equation}
   B_k(T_*) \leq \mathcal{O}((vT_*^{1/m})^{m-1}e^{-vT_*^{1/m}}).
\end{equation}
One can see that the exponential decay of $e^{-vT_*^{1/m}}$ is much stronger than the polynomial growth of $(vT_*^{1/m})^{m-1}$ as $T_* \rightarrow \infty$. Therefore, for any small $\delta > 0$, we can find a constant $C_\delta$ such that $(vT_*^{1/m})^{m-1} \leq C_\delta e^{\delta (vT_*^{1/m})}$. Substituting this into the above equation
\begin{align*}
   B_k(T_*) &\leq \mathcal{O}\left( e^{\delta v T_*^{1/m}} \cdot e^{-vT_*^{1/m}} \right)
   = \mathcal{O}\left( e^{-(1-\delta)v T_*^{1/m}} \right).
\end{align*}
By choosing $\delta$ such that $\beta = (1-\delta)v > 0$, we obtain
\begin{align*}
   B_k(T_*) \leq \mathcal{O}(e^{-\beta T^{1/m}_*}).
\end{align*}
This concludes the proof. \hfill $\square$
\subsection{Proof of Proposition~\ref{prop_1}} \label{proof_proposition}
We first recall the setup of the principal-agent problem with respect to the Figure~\ref{v2g_schematic}. The principal (aggregator) shows the contract $x_t \in [0,1]^2$ to the agent (EV user), and the agent takes a strategic action $a^\star(x_t)$ in hindsight. As a result, an outcome is realized following the condition in \eqref{val_est}. The main focus here is to understand the existence of a production function $p(\cdot)$ as defined in \eqref{agent_bp}. If we manage to show that $J^P(a^\star(x_t),\zeta)= \left [J(z^\star(a^\star(x_t)),\zeta) + (\hat{P}^d(a^\star(x_t),\zeta)-\hat{P}^c(a^\star(x_t),\zeta))^{\top}\lambda^g \Delta t \right ], $ is a well-defined random variable in \eqref{val_est}, we can confirm the existence of a production function. To establish that $J^P(a^\star(x_t),\zeta)$ is a random variable, we proceed by the following steps. 
\begin{enumerate}
    \item Since \eqref{obj_sc} indicates that $J$ is continuous in $z$, we apply Berge's maximum theorem \cite[Theorem $17.31$]{aliprantis2006infinite} to confirm that $J(z^\star(a^\star(x_t)),\zeta)$ is continuous in $\zeta$.
    \item Moreover, since $J(z,\zeta)$ is a Borel measurable function of $\zeta$ (Assumption~\ref{ass:smart-charging}), both the optimizer $z^\star(a^\star(x_t))$ and optimal objective value $J(z^\star(a^\star(x_t),),\zeta)$ become Borel measurable in $\zeta$, where the feasibility of the optimizer $z^\star(a^\star(x_t))$ in~\eqref{short_form} is ensured by Assumption~\ref{ass:smart-charging}.
\end{enumerate}
Finally, $J^P(a^\star(x_t),\zeta)$ is a linear combination of measurable functions, which makes it a measurable function of $\zeta$. Hence, it is a well-defined random variable, which makes the definition $p^\mathrm{h}$ and $p^\mathrm{l}$ consistent. This concludes the proof. \hfill $\square$
\section{Detailed smart charging formulation }  \label{sc_detailed}
Consider $P^{\text{c}}\in \mathbb{R}^N$ is the charging profile, and  $P^{\text{d}} \in \mathbb{R}^N$ is the discharging profile for $N$ intervals. The aggregator defines the buying price $\lambda^b_i= c \lambda^{g}_i, \: \forall i \in [N]$, and the selling price $\lambda^s_i= c' \lambda^{g}_i, \: \forall i \in [N]$, where $c,c'$ are the aggregator-defined constants resposible for profit margin such that $c \in (0,1], c' \geq 1$. $E \in \mathbb{R}^N$ is the EV battery energy profile with $\overline{E}$ and $\underline{E}$ being the limits of the energy level. Furthermore, we use $\delta_i \in \{0,1\}, \forall i \in [N]$ to refer to the charging ($\delta_i=1$) or the discharging ($\delta_i=0$) mode of the EV, and we use $\eta^c/\eta^d$ to refer to the charging/discharging coefficients embedding the efficiency and battery capacity of the EV, respectively. Based on the active load profile up to time $t$, we also have the bounds on the charging/discharging power as $\overline{P}_t$ and $\underline{P}_t$, respectively. Given these terminologies, the smart charging algorithm solves the following optimization problem: 
\begin{subequations} \label{charging_prob}
    \begin{align}
        \max_{P^{\text{c}},P^{\text{d}},\delta,E} \quad & J:= (P^{\text{c}}(\lambda^{s})^{\top}-P^{\text{d}}(\lambda^{b})^{\top})\Delta t  \nonumber \\
        &- \alpha (\lVert P^c \rVert_2^2 + \lVert P^d \rVert_2^2 ) , \label{obj_sc} \\
        \text{s.t.} \quad \quad & E_{i+1}= E_i + \eta^c P^{\text{c}} \Delta t- \eta^d P^{\text{d}} \Delta t,  \forall i \in [N], \\
        & 0 \leq P^{\text{c}}_i \leq \delta_i \overline{P}_{t,i},\: \forall i \in [N], \\
        & 0 \leq P^{\text{d}}_i \leq (1-\delta_i) \underline{P}_{t,i},\: \forall i \in [N], \\
        &\underline{E} \leq E_i \leq \overline{E}, \: \forall i\in [N], \\
        & E_1= E_{\text{initial}}, \: E_{N+1}= E_{\text{desired}}, \\
        &\delta_j=1, \; \forall j \in [N], \; \mathrm{if}\; a^\star(x_t)=1,
    \end{align}
\end{subequations}
\begin{figure}
    \centering
    \includegraphics[width=0.6\linewidth]{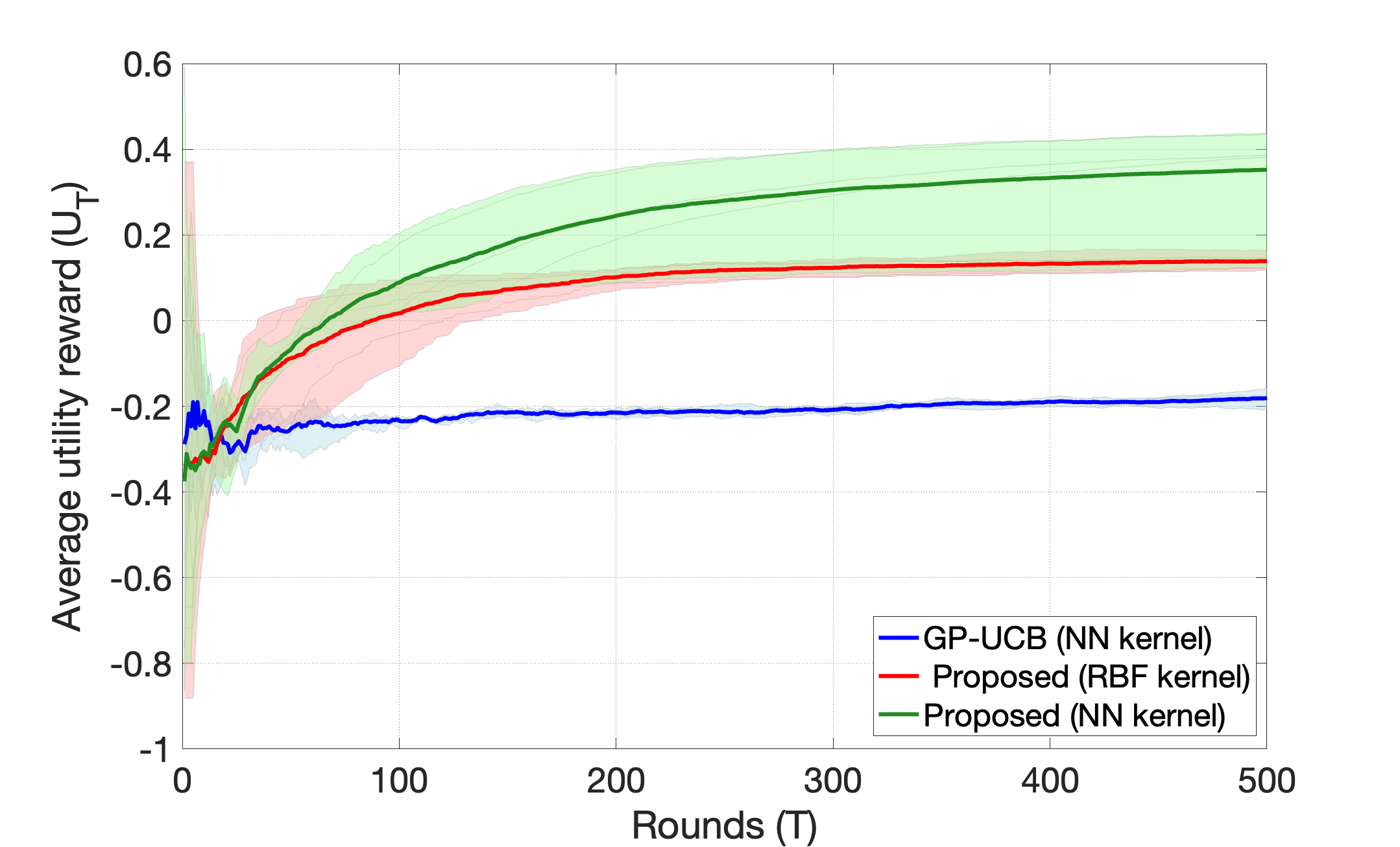}
    \caption{Assessing performance of \texttt{Heteroscedastic GP-UCB} (paired with Neural Network kernel) under two scenarios. In scenario $1$, it outperforms GP-UCB \cite{srinivas2012information} with a Neural Network kernel, validating the effect of considering heteroscedastic noise. In scenario $2$, it also outperforms the radial basis function (RBF) kernel, validating our kernel choice.}
    \label{ablation_fig}
\end{figure}
where the objective is to maximize the net profit $J \in \mathbb{R}$ for the EV user, the constraints include energy dynamics of the EV battery, limiting bounds for charging/discharging power and energy, $\alpha$ is a regularizer coefficient, $\Delta t$ is the time interval, and finally, $N, E_{\text{initial}}, E_{\text{desired}}$ are the parameters given by the EV users.
\section{Implementation details of numerical experiments and additional results} \label{implement_details}
\begin{table}[t]
\centering
\caption{Details of uncertain and fixed parameters for numerical simulation.}
\label{tab:parameters}

\begin{minipage}[t]{0.48\columnwidth}
\centering
\caption*{(a) Uncertain parameters}
\label{exp_data}
{\setlength{\tabcolsep}{4pt}
\begin{tabular}{lc}
\hline
\textbf{Uncertain parameters} & \textbf{Distribution} \\
\hline
$N$ & Uniform($20,24$) \\
$\eta^c$ and $\eta^d$ & Uniform(0.9,0.98) \\
$E_{\text{initial}}$ & Uniform(10,15) \\
$E_{\text{desired}}$ & Uniform(20, $N\overline{P}_t\eta^c\Delta t$) \\
\hline
\end{tabular}
}
\end{minipage}
\hfill
\begin{minipage}[t]{0.48\columnwidth}
\centering
\caption*{(b) Fixed parameters}
\label{exp_data_fixed}
{\setlength{\tabcolsep}{4pt}
\begin{tabular}{lc}
\hline
\textbf{Fixed parameters} & \textbf{Values} \\
\hline
$\underline{E}$ & $5$ kWh \\
$\overline{E}$ & $90$ kWh \\
$\overline{P}_t$ and $\underline{P}_t$ & $11$ kW \\
$c$ & $0.7$ \\
$c'$ & $1.2$ \\
$J'$ & $80$ \euro \\
$\Delta$ & $10$ \euro \\
$v^h$ & $0.9$ \\
$v^l$ & $0$ \\
$\Delta t$ & $15$ minutes \\
\hline
\end{tabular}
}
\end{minipage}

\end{table}
\subsection{Implementation details}
We have used Assumption~\ref{ass:smart-charging} to construct the random vector $\zeta$ in Table~\ref{exp_data} such that its realized values appear realistic for the application, while maintaining the feasibility of the smart charging problem in \eqref{charging_prob}. The values of the fixed parameters in \eqref{charging_prob} and \eqref{val_est} are given in Table~\ref{exp_data_fixed}. In addition, the tariff data from DSO ($\lambda^g$) is taken from Figure $7$ of \cite{lai2022pricing}. Given $\lambda^g$, one can easily construct $\lambda^b$ and $\lambda^s$ using the profit margin constants $c$ and $c'$ from Table~\ref{exp_data_fixed}.  
\subsection{Additional results} \label{ablation_appendix}
We present an ablation study in Figure~\ref{ablation_fig} to decouple the contributions of heteroscedastic variance modeling and kernel selection. First, to isolate the impact of the noise model, we compare our approach against standard homoscedastic GP-UCB \cite{srinivas2012information} equipped with the Neural Network kernel. Second, to validate the kernel geometry, we substitute the NN kernel with the standard Radial Basis Function (RBF) kernel within our framework. The results demonstrate that our proposed method outperforms both variations, confirming that both the heteroscedastic structure and the non-stationary kernel are essential for our problem setup. 
\bibliographystyle{plain}        
\bibliography{ref}
\end{document}